\newcommand{\cJ}{\mathcal J}
\newcommand{\cN}{\mathcal N}
\newcommand{\cO}{\mathcal O}
\newcommand{\cX}{\mathcal X}
\newcommand{\bbE}{\mathbb{E}}
\newcommand{\bbP}{\mathbb{P}}
\newcommand{\bbR}{\mathbb{R}}
\newcommand{\bmat}{\begin{bmatrix}}
\newcommand{\emat}{\end{bmatrix}}
\newcommand{\bsmat}{\begin{bsmallmatrix}}
\newcommand{\esmat}{\end{bsmallmatrix}}
\newcommand\undermat[2]{%
  \makebox[0pt][l]{$\smash{\underbrace{\phantom{%
    \begin{matrix}#2\end{matrix}}}_{\text{$#1$}}}$}#2}
\theoremstyle{plain}
\newtheorem{lemma}{Lemma}
\newtheorem{theorem}{Theorem}
\newtheorem{proposition}{Proposition}
\theoremstyle{definition}
\newtheorem{problem}{Problem}
\newtheorem{definition}{Definition}
\newtheorem{assumption}{Assumption}
\theoremstyle{remark}
\def\BibTeX{{\rm B\kern-.05em{\sc i\kern-.025em b}\kern-.08em
    T\kern-.1667em\lower.7ex\hbox{E}\kern-.125emX}}
\begin{document}
\title{Identifying Time-varying Costs in Finite-horizon Linear Quadratic Gaussian Games}
\author{Kai Ren and Maryam Kamgarpour	\thanks{Ren and Kamgarpour are with the SYCAMORE Lab, École Polytechnique Fédérale de Lausanne (EPFL), Switzerland (e-mail: {\tt\small kai.ren@epfl.ch; maryam.kamgarpour@epfl.ch}).}
\thanks{Ren's research is supported by Swiss National Foundation Grant $\#200020\_207984 \slash  1$.}} 

\maketitle 
\begin{abstract}
    We address cost identification in a finite-horizon linear quadratic Gaussian game. We characterize the set of cost parameters that generate a given Nash equilibrium policy. We propose a backpropagation algorithm to identify the time-varying cost parameters. We derive a probabilistic error bound when the cost parameters are identified from finite trajectories. We test our method in numerical and driving simulations. Our algorithm identifies the cost parameters that can reproduce the Nash equilibrium policy and trajectory observations. 
\end{abstract}

\begin{IEEEkeywords}
Game theory; Identification; Stochastic control.
\end{IEEEkeywords} \vspace{-3mm}

\section{Introduction} \label{sec:introduction}
\IEEEPARstart{I}{n} real-world multi-robot applications, such as warehouse automation and autonomous cars, agents operate in shared environments and influence each other’s decisions. Dynamic games \cite{Basar1998} provide a framework for modeling multi-agent interactions, where agents jointly control a dynamical system and optimize their own costs. Open-loop \cite{cleach2021algames, Zhu2023} and feedback Nash equilibria \cite{Laine2023, Zhong2023, Ren2025} capture the control input sequences or policies where no agents can unilaterally reduce their costs. These frameworks have been applied in autonomous driving and robotics. 

Solving the Nash equilibrium typically requires the cost functions of all players, which are often unknown. For example, at an unsignalized intersection as shown in Fig.~\ref{fig:noiseTraj}, an aggressive green vehicle that puts higher weights on the goal-reaching objective may force others to yield, while a cautious one allows others to proceed first. Inferring the underlying costs of other agents enables the prediction of interactions under varying scenarios or dynamics.

Cost identification in a single-agent setting has a long history since the work of Kalman~\cite{Kalman1964WhenIA}. Identifying cost from control policies has been investigated in both continuous-time~\cite{Priess2015} and discrete-time~\cite{Menner2018ECC} inverse linear quadratic control. For a given policy, the corresponding cost matrices are characterized through the algebraic Riccati equation, and their computation is formulated as a semi-definite program. Cost identification from the state \cite{ZHANG2019IOC} or output \cite{Zhang2019ILQR} trajectories has been addressed in a discrete-time finite-horizon inverse linear quadratic control. These works formulate a convex program from Pontryagin’s Maximum Principle characterization of the optimal control to compute the cost matrices. 

In the multi-agent setting, \cite{Inga2019, Huang2022} characterize the set of cost matrices that induce a feedback Nash equilibrium in an infinite-horizon linear quadratic differential game. The characterization is based on the coupled Riccati equations. These works, however, focus on stationary policies and cost matrices in the infinite-horizon setting. 

In realistic multi-robot applications~\cite{Ren2025, Zhong2023}, finite-horizon and time-varying costs are commonly adopted. For example, at an interaction, a driver focuses more on avoiding collisions when two cars are close. As they move apart, the driver shifts attention toward reaching the destination.

\begin{figure}[t]
    \centering
    \includegraphics[width=0.8\linewidth]{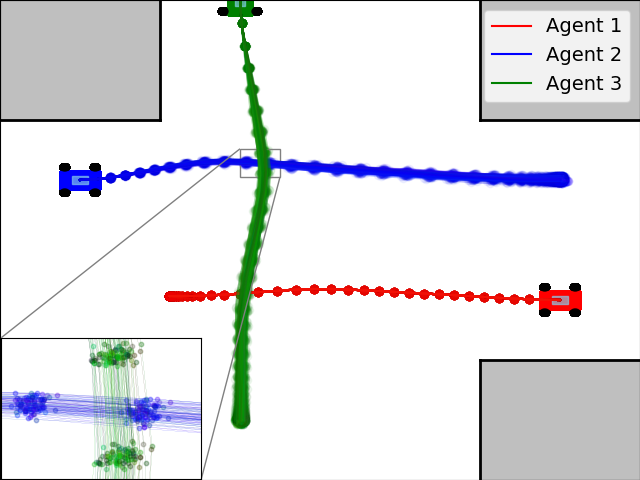}
    \caption{Demonstration trajectories of three cars driving in a cross intersection. The trajectories are generated by the same Nash equilibrium policy but different system noise realizations.}
    \label{fig:noiseTraj}
\end{figure}

In the finite-horizon setting, considering nonlinear dynamics or noisy, partial observations, past work has developed approaches beyond Riccati or Pontryagin-based characterizations. In discrete-time nonlinear dynamic games, maximum likelihood estimation has been used to identify costs from noisy state and input trajectories corresponding to feedback~\cite{Li2023} and open-loop~\cite{peters2021inferringobjectives, Liu2023} Nash equilibria. In differential games, cost identification from open-loop Nash equilibria~\cite{Molloy2017, Awasthi2020} is achieved by minimizing the residual of the optimality condition. Bayesian methods have been developed for autonomous driving, where the Gaussian \cite{cleach2020lucidgamesonlineunscentedinverse} or discrete \cite{peters2023contingencygames} distributions of cost parameters are updated using online state and input observations.

An open question is the characterization of the time-varying cost parameters in a finite-horizon linear quadratic Gaussian (LQG) game, given a Nash equilibrium policy or trajectories generated by the Nash policy. We consider a cost identification framework for a finite-horizon LQG game. Our main contributions are
\begin{itemize}
    \item We characterize the set of time-varying cost parameters that yield a given Nash equilibrium policy, and present a backpropagation algorithm for cost identification.
    \item We develop a probabilistic upper bound on the cost identification error when the Nash equilibrium policy is estimated from finite trajectories.
    \item We validate our method with numerical and driving simulations, showing that our method can identify the cost parameters that reconstruct the Nash equilibrium policies and trajectories in multi-agent interactions.
\end{itemize}

\textit{Notations:} A Gaussian distribution with mean $\mu$ and covariance matrix $\Sigma$ is denoted as $\mathcal{N}(\mu, \Sigma)$. We denote a set of consecutive integers by $[a]=\{1, 2, \dots, a\}$ and $[a]^- = \{0, 1, \dots, a-1\}$.  The weighted \(L_2\) norm of a vector \( x \) with a weighting matrix \( P \succeq 0\) is $\|x \|_{P} = \sqrt{x^\top P x}$. Let $I_n$ denote the identity matrix in $\bbR^{n\times n}$. We use $0_{m\times n}$ and $1_{m\times n}$ to represent the matrices of 0 and 1 in $\bbR^{m\times n}$, respectively. The vectorization of $A \in \mathbb{R}^{m \times n}$ is denoted by $\operatorname{vec}(A) \in \mathbb{R}^{mn}$. The Kronecker product of $A \in \mathbb{R}^{m \times n}$ and $B \in \mathbb{R}^{p \times q}$ is denoted by $A \otimes B \in \mathbb{R}^{mp \times nq}$. We use $\wedge$ to denote a logical AND operator. Let $f(\cdot), g(\cdot)$ be real-valued functions, we write $f(\epsilon) = \mathcal{O}(g(\epsilon))$ if there exist constants $c > 0$ and $\epsilon_0$ such that $|f(\epsilon)| \le c\,|g(\epsilon)|$ for all $\epsilon \geq \epsilon_0$.

\section{Problem Formulation}
We consider the cost identification from the following linear quadratic Gaussian (LQG) game: \vspace{-3mm}

{\small
\begin{subequations} \label{problem:unconstrainedLQ} 
\begin{align} 
&\underset{\gamma^i}{\text{min}}
&& \cJ^i(\gamma) := \bbE\left[\sum_{t=0}^{T-1} \left( \|x_{t+1}\|^2_{Q^{i}_{t+1}} + l^{i\top}_{t+1} x_{t+1} + \|u^i_t\|^2_{R^{i}_{t}} \right)\right],  \label{eq:LQcost}\\[-6pt]
&\text{s.t. }
&&   x_{t+1} = A_t x_t + \sum_{i=1}^{N} B_{t}^{i} u^i_t + \omega_t, \label{eq:dynamics} \\
&&&  x_0 \sim \cN(\mu_0, \chi_0), \quad w_t \sim \cN(0, \Sigma_t). \label{eq:Gaussian}
\end{align}
\end{subequations}}

We denote the state and control input of agent \( i \) at time \( t \) as \( x^{i}_{t} \in \mathbb{R}^{n_{x^{i}}}\) and \( u^{i}_{t} \in \mathbb{R}^{n_u} \), respectively. The shared state vector \( x_t = [x_{t}^{1^\top}, \ldots, x_{t}^{N^\top} ]^\top  \in \mathbb{R}^{n_x} \), with $n_{x} = \sum_{i=1}^{N} n_{x^{i}}$,  contains the states of all agents. The system dynamics matrices are $A_t \in \mathbb{R}^{n_x \times n_x}$ and $B_t^i \in \mathbb{R}^{n_x \times n_u}$ are the system's dynamics matrices. The cost weighing matrices are $Q^{i}_{t} \in \mathbb{R}^{n_x \times n_x}$, $l^{i}_{t} \in \mathbb{R}^{n_x}$ and $R^{i}_{t} \in \mathbb{R}^{n_u \times n_u}$. We assume $R_t^i$ is diagonal and $R^{i}_{t} \succ 0$, which is realistic in robotic applications, as the costs penalize the magnitude of the inputs (e.g., acceleration, angular velocity) and these inputs are typically decoupled.  

We consider a Gaussian initial state $x_0 \sim \cN(\mu_0, \chi_0)$ and a system noise $w_t \in \mathbb{R}^{n_x}$ to be a sample of a Gaussian distribution $w_t \sim \cN(0, \Sigma_t)$.

A feedback policy for player $i$ is denoted as $\gamma^i= (\gamma^i_0, \gamma^i_1, \ldots, \gamma^i_{T-1})$, where $\gamma^i_t: \cX \rightarrow \bbR^{n_u}$ determines the control input based on the evolving states, $u_t^i = \gamma^i_t(x_t)$. We denote \( \gamma^{-i} \) the control inputs of all players except player \( i \).
\begin{definition}
A feedback Nash equilibrium is a control policy $(\gamma^{i*}, \gamma^{-i*})$ where for every player $i \in [N]$, $\cJ^i(x_0, (\gamma^{i*}, \gamma^{-i*})) \leq \cJ^i(x_0, (\gamma^i, \gamma^{-i*})), \; \forall \gamma^i.$
\end{definition}

Under suitable conditions\footnote{The invertibility of $\Phi_t, \;\forall t\in[T]^-$, defined by \eqref{eq:phi} in Appendix~\ref{appendix:phi}.} \cite[Remark 6.5]{Basar1998}, a unique affine state-feedback Nash equilibrium for the LQG game \eqref{problem:unconstrainedLQ} exists and is given by 
\begin{equation} \label{eq:feedbackU}
    \gamma^{i*}_{t}(x_t) = -K^{i*}_{t} x_t - \alpha^{i*}_{t}, \;\;\; \forall t \in [T]^-,
\end{equation} \vspace{-5mm}

Observe that in contrast to past works in inverse linear quadratic regulator~\cite{Priess2015, Menner2018ECC} and inverse linear quadratic game~\cite{Inga2019}, we explicitly incorporate a linear term $l^{i\top}_{t+1} x_{t+1}$ in the cost. This modeling choice is important in robotics applications, where linear cost terms can encode reference tracking or local approximation of collision cost \cite[Lemma 1]{Ren2025}. For example, consider a reference tracking penalty of $\|x_t - \bar{x}_t\|^2_2 = x_{t}^\top Q_{t} x_{t} + l^{\top}_{t} x_{t} + c$. A different reference point $\bar{x}_t$ yields the same $Q_t = I_{n_x}$ but different $l_t = -2 \bar{x}_t$. \vspace{2mm}

In this section, we aim to address the following problem.
\begin{problem} \label{problem:1}
    Given the feedback Nash equilibrium policies $\{K^{i*}_{t}, \alpha^{i*}_{t}\}_{t \in [T]^-}^{i \in [N]}$ of the LQG game \eqref{problem:unconstrainedLQ}, determine the cost parameters $\{Q^{i}_{t+1}, l^{i}_{t+1}, R^{i}_{t}\}_{t \in [T]^-}^{i \in [N]}$, whose Nash equilibrium coincides with the given policies.
\end{problem}

\subsection{Cost identification in finite-horizon LQG games}
We first characterize the set of time-varying cost parameters that yield a Nash equilibrium policy in a finite-horizon LQG game \eqref{problem:unconstrainedLQ}. We enforce the following assumption for the characterization.
\begin{assumption}\label{assump:CLinverse}
 The closed-loop system \(F_t = A_t - \sum_{j=1}^N B_t^j K^{j*}_t\) is invertible for all $t \in [T]^-$. 
\end{assumption}

Assumption~\ref{assump:CLinverse} can be verified a priori from the known dynamics and Nash policies. Analogous assumption has been proven~\cite[Theorem~2.1]{ZHANG2019IOC} in a single-agent inverse linear quadratic setting.

The characterization is based on a vectorization and parametrization of the coupled Riccati equation, similar to the formulations in the single-agent linear quadratic regulator \cite{Menner2018ECC} and the infinite-horizon linear quadratic game \cite{Inga2019}. The following matrices are defined for all $t \in [T]^-$, based on he known dynamics and the Nash equilibrium policy:
 \begin{subequations} \label{eq:defNoBack}
  \begin{align}
    &S_t^i = F_t^\top \otimes (B_t^i)^\top \in \bbR^{n_un_x \times n_x^2}, \\
    &\textstyle E_{t}^i = K_{t}^{i*} F_{t}^{-1} \sum_{j=1}^N B_{t}^j \alpha^{j*}_{t} + \alpha^{i*}_{t}\in \bbR^{n_u}. 
\end{align}
 \end{subequations}

\begin{proposition} \label{prop:characterization}
Under assumption~\ref{assump:CLinverse}, for a given Nash equilibrium policy $\{K^{i*}_{t}, \alpha^{i*}_{t}\}_{t \in [T]^-}^{i \in [N]}$, the cost parameter of player \( i \) at time \( t \) 
\[
\theta_t^i := 
[
\textnormal{vec}(Q_{t}^i)^\top, \;
\textnormal{vec}(l_{t}^i)^\top,  \;
\textnormal{vec}(R_{t-1}^{i})^\top,  \;
1]^\top \in \bbR^L,\]
is characterized by 
\begin{equation}
    \label{eq:vec}
    M_t^i \theta_t^i = 0_{n_u(n_x+1)  \times 1},
\end{equation} 
{\small
\begin{align*} 
    & \text{\normalsize where \;} M_t^i = \begin{bmatrix}
        S_{t-1}^i & 0_{n_un_x \times n_x} & - K_{t-1}^{i* \top} \otimes I_{n_u} & S_{t-1}^i\bar{\Delta}_{t}^i \\
        0_{n_u \times n_x^2} & B_{t-1}^{i \top} &  - E_{t-1}^{i \top} \otimes I_{n_u} &  B_{t-1}^{i \top}\bar{\Omega}_{t}^i
    \end{bmatrix}. \label{eq:vecDef}
\end{align*}}

The terms $\bar{\Delta}_{t}^i$ and $\bar{\Omega}_{t}^i$ are defined recursively backward. \vspace{1mm}

At $t = T$, set  \(\bar{\Delta}_{T}^i = 0_{n_x^2}, \; \bar{\Omega}_{T}^i = 0_{n_x},\) from which $M_T^i$ is determined for all $i \in [N]$. Using \eqref{eq:vec}, we obtain $\{Q_{T}^i, l_{T}^i, R_{T-1}^i\}^{i\in [N]}$, and we set \(P^{i*}_T = Q_{T}^{i}, \; \zeta^{i*}_T = l_T^i\) for all $i \in [N]$. \vspace{2mm}

For $t = T-1, \dots, 1$, the terms $\bar{\Delta}_{t}^i$ and $\bar{\Omega}_{t}^i$ are updated as follows.  
{\small
\begin{subequations} \label{eq:recur}
\begin{alignat}{2}
&\bar{\Delta}_{t}^i 
= \textnormal{vec}\!\left( F_{t}^{\top} P^{i*}_{t+1} F_{t} + K_{t}^{i*\top} R_{t}^i K_{t}^{i*} \right), \\
&\textstyle \bar{\Omega}_{t}^i 
\textstyle = \textnormal{vec}\!\left[
F_{t}^\top \!\left(\zeta_{t+1}^{i*} 
- P_{t+1}^{i*} \sum_{j = 1}^N B_{t}^j \alpha_{t}^{j*}\right)
+ (K^{i*}_{t})^\top R_{t}^i \alpha^{i*}_{t}
\right].
\end{alignat}
\end{subequations}}

We obtain $M_t^i$ for all $i \in [N]$.  
Using \eqref{eq:vec}, we solve for $\{Q_{t}^i, l_{t}^i, R_{t-1}^i\}^{i\in [N]}$, and update  
{\small
\begin{align*}
P^{i*}_t &= F_t^\top P^{i*}_{t+1} F_t + (K^{i*}_t)^\top R_t^i K^{i*}_t + Q_{t}^{i}, \\
\zeta^{i*}_t & \textstyle = F_t^\top \!\left( \zeta^{i*}_{t+1} - P^{i*}_{t+1} \sum_{j = 1}^N B_t^j \alpha_t^{j*} \right)
+ (K^{i*}_t)^\top R_t^i \alpha^{i*}_t + l^i_t.
\end{align*}}

\end{proposition}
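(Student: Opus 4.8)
The plan is to derive, by backward dynamic programming, the coupled recursion that any feedback Nash equilibrium of \eqref{problem:unconstrainedLQ} must satisfy for a given cost tuple, and then to check that \eqref{eq:vec} — propagated backward as in the proposition — is exactly that recursion, rewritten so as to be linear in the unknown cost parameters.

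\textbf{Step 1: the dynamic-programming conditions.} Fix every player but $i$ at the candidate policy $\gamma^{-i*}$ and consider player $i$'s best-response problem. Posit a quadratic cost-to-go $W_t^i(x)=x^\top P_t^{i*}x+\zeta_t^{i*\top}x+c_t^i$ that also absorbs the stage penalty $\|x_t\|_{Q_t^i}^2+l_t^{i\top}x_t$. Writing $\tilde A_t^i:=A_t-\sum_{j\neq i}B_t^jK_t^{j*}$ and using $\bbE[\omega_t]=0$ (so $\omega_t$ only adds $\operatorname{tr}(P_{t+1}^{i*}\Sigma_t)$ to $c_t^i$ and never enters the first-order condition), the inner minimization over $u_t^i$ is a convex quadratic under the hypotheses that guarantee \eqref{eq:feedbackU}, hence characterized by its gradient. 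Setting the minimizer equal to $-K_t^{i*}x_t-\alpha_t^{i*}$ and using $F_t=\tilde A_t^i-B_t^iK_t^{i*}$ to cancel the $(B_t^i)^\top P_{t+1}^{i*}B_t^i$ terms yields the two stationarity identities
\[
R_t^iK_t^{i*}=(B_t^i)^\top P_{t+1}^{i*}F_t,\qquad
R_t^i\alpha_t^{i*}=(B_t^i)^\top\!\Big(\zeta_{t+1}^{i*}-P_{t+1}^{i*}\sum_{j}B_t^j\alpha_t^{j*}\Big),
\]
and back-substituting the minimizer into $W_t^i$ produces precisely the $P_t^{i*},\zeta_t^{i*}$ updates and the terminal data $P_T^{i*}=Q_T^i$, $\zeta_T^{i*}=l_T^i$ of the proposition. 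Conversely, given $\{P_t^{i*},\zeta_t^{i*}\}$ solving all of these, the same computation shows $\gamma^{i*}$ is the unique best response to $\gamma^{-i*}$ for each $i$. Hence a cost tuple reproduces the given policy if and only if the two displayed identities hold at every $t$ along the $P^*,\zeta^*$ recursion; moreover the coupling across players is only through the known $F_t$, $K_t^{j*}$ and $\sum_jB_t^j\alpha_t^{j*}$, so these conditions split into one backward chain per player.

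\textbf{Step 2: vectorization into \eqref{eq:vec}.} I then rewrite the two identities of Step 1, taken at time $t-1$, using $\operatorname{vec}(XYZ)=(Z^\top\otimes X)\operatorname{vec}(Y)$ and the abbreviations \eqref{eq:defNoBack}–\eqref{eq:recur}. Since $P_t^{i*}=Q_t^i+\big(F_t^\top P_{t+1}^{i*}F_t+K_t^{i*\top}R_t^iK_t^{i*}\big)$ with the bracket equal to the reshaping of $\bar\Delta_t^i$, the gain identity $(B_{t-1}^i)^\top P_t^{i*}F_{t-1}=R_{t-1}^iK_{t-1}^{i*}$ becomes $S_{t-1}^i\operatorname{vec}(Q_t^i)+S_{t-1}^i\bar\Delta_t^i-(K_{t-1}^{i*\top}\otimes I_{n_u})\operatorname{vec}(R_{t-1}^i)=0$, which is the first block row of \eqref{eq:vec}. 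For the affine identity, Assumption~\ref{assump:CLinverse} lets me use $R_{t-1}^iK_{t-1}^{i*}F_{t-1}^{-1}=(B_{t-1}^i)^\top P_t^{i*}$ to replace the $Q_t^i$-dependent term $(B_{t-1}^i)^\top P_t^{i*}\sum_jB_{t-1}^j\alpha_{t-1}^{j*}$ by $R_{t-1}^iK_{t-1}^{i*}F_{t-1}^{-1}\sum_jB_{t-1}^j\alpha_{t-1}^{j*}$; adding $R_{t-1}^i\alpha_{t-1}^{i*}$ gives exactly $R_{t-1}^iE_{t-1}^i$ by the definition of $E_{t-1}^i$ in \eqref{eq:defNoBack}, so the affine identity reads $(B_{t-1}^i)^\top\zeta_t^{i*}=R_{t-1}^iE_{t-1}^i$; substituting $\zeta_t^{i*}=l_t^i+\bar\Omega_t^i$ and vectorizing yields $B_{t-1}^{i\top}\operatorname{vec}(l_t^i)+B_{t-1}^{i\top}\bar\Omega_t^i-(E_{t-1}^{i\top}\otimes I_{n_u})\operatorname{vec}(R_{t-1}^i)=0$, the second block row of \eqref{eq:vec}. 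This substitution is the only use of Assumption~\ref{assump:CLinverse}, and it is exactly why $Q_t^i$ is absent from the second block row; the trailing $1$ in $\theta_t^i$ carries the offsets $S_{t-1}^i\bar\Delta_t^i$ and $B_{t-1}^{i\top}\bar\Omega_t^i$ into a homogeneous equation.

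\textbf{Step 3: backward induction, and the main difficulty.} I then package the equivalence as a backward induction on $t$. Base case $t=T$: $\bar\Delta_T^i=0$, $\bar\Omega_T^i=0$, so \eqref{eq:vec} collapses to the Step-1 identities at time $T-1$ with $P_T^{i*}=Q_T^i$, $\zeta_T^{i*}=l_T^i$, consistent with the terminal data. Inductive step: assuming $\{P_{t+1}^{j*},\zeta_{t+1}^{j*}\}_j$ already equal the value matrices of the fixed costs at times $\ge t+1$ (in particular $R_t^j$ is known, being a block of $\theta_{t+1}^j$), the matrices $\bar\Delta_t^j,\bar\Omega_t^j$ and hence $M_t^j$ are determined, $\{M_t^j\theta_t^j=0\}_{j\in[N]}$ characterizes exactly the admissible $\{Q_t^j,l_t^j,R_{t-1}^j\}_j$ by Step 2, and the proposition's updates advance $P_t^{j*},\zeta_t^{j*}$ one step; iterating down to $t=1$ gives the stated characterization. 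I expect the main obstacle to be Step 1 — getting the LQG feedback-Nash recursion with the affine and noise terms exactly right, keeping the index bookkeeping straight (the time-$t$ object $\theta_t^i$ carries $R_{t-1}^i$ and couples to $F_{t-1},K_{t-1}^{i*}$, whereas $\bar\Delta_t^i,\bar\Omega_t^i,P_t^{i*},\zeta_t^{i*}$ are built from time-$t$ data), and arguing that the per-player $u_t^i$-minimization is genuinely strictly convex, so that the first-order conditions are also sufficient and the characterization is an honest ``if and only if''.
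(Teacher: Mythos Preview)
Your proposal is correct and follows essentially the same route as the paper's proof: derive the coupled Riccati/stationarity identities $R_t^iK_t^{i*}=(B_t^i)^\top P_{t+1}^{i*}F_t$ and $R_t^iE_t^i=(B_t^i)^\top\zeta_{t+1}^{i*}$, substitute $P_t^{i*}=Q_t^i+\bar\Delta_t^i$ and $\zeta_t^{i*}=l_t^i+\bar\Omega_t^i$, and vectorize to obtain the two block rows of $M_t^i\theta_t^i=0$. The only cosmetic differences are that you re-derive the recursion by dynamic programming instead of citing \cite{Basar1998}, and you are more explicit about the converse (sufficiency) direction; otherwise the algebra, including the use of Assumption~\ref{assump:CLinverse} to form $E_{t-1}^i$, is identical.
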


The proof of Proposition~\ref{prop:characterization} is provided in Appendix~\ref{appendiex:unconstrainedILQG}. Proposition~\ref{prop:characterization} states that for each player $i$ and timestep $t$, the cost parameter $\theta_t^i \in \bbR^L$ that can reproduce the Nash equilibrium policy lies in the null space of $M_t^i$. Here, $L := n_x^2 + n_x + n_u^2 + 1$ and $M_t^i \in \mathbb{R}^{(n_u n_x + n_u) \times L}$. The null space of $M_t^i$ is non-empty, as $n_u n_x + n_u < L$.

In contrast to the infinite-horizon setting~\cite{Inga2019}, where one set of coupled Riccati equations determines the stationary policy, the finite-horizon case requires propagating \eqref{eq:recur} backward in time to capture the time-varying cost structure. In particular, characterizing $\theta^i_t$ requires $\bar{\Delta}_{t}^i, \bar{\Omega}_{t}^i$, which depend on $\theta^i_{t+1}$.

\subsection{Cost identification algorithm}
 Given the characterization in Proposition~\ref{prop:characterization}, we solve the following constrained least-squares problem to compute the cost parameters for certain $i$ and $t$, as was done in~\cite{Inga2019} for the infinite horizon setting.
\begin{subequations} \label{eq:QP}
    \begin{alignat}{2}
        & \min_{\theta^i_t} \quad r_t^i = \|M_t^i \theta^i_t\|_2^2, \label{residual}\\
        & \;\text{s.t. } \quad D \; \theta^i_t \;\ge\; \tau. \label{eq:nontrivialConst}
    \end{alignat}
\end{subequations}
where $\tau$ is a small positive value and the inequality \eqref{eq:nontrivialConst} holds element-wise. \( D\) is a selector matrix that enforces strict positiveness only to the unknown entries in \( \theta_t^i \). 
Specifically, \( D \) excludes the off-diagonal elements of \( R_t^i \) that are 0 and the last element of \( \theta_t^i \) that is 1.  The constraint \eqref{eq:nontrivialConst} ensures non-trivial cost matrices (i.e., non-zero and $R^i_t \succ 0$). The feasible region is convex, as \eqref{eq:nontrivialConst} represents hyperplanes.
 
The process for computing the time-varying cost parameters over a finite horizon is summarized in the following algorithm.

\begin{algorithm}[ht]
\begin{minipage}{0.5\textwidth}
\caption{Cost identification in finite-horizon LQG game}
\label{alg:ILQGG}
\begin{algorithmic}[1]
\State \textbf{Input:} Nash equilibrium policy: $\{K^{i*}_{t}, \alpha^{i*}_{t}\}^{i \in [N]}_{t \in [T]^-}$.

\State \textbf{Initialize:}  $\bar{\Delta}_{T}^i = 0_{n_x^2}, \; \bar{\Omega}_{T}^i = 0_{n_x},$  for all $i \in [N]$.

\For{\(t = T, T-1, ..., 1\)} \vspace{1mm}

\State $\{\theta_t^{i}\}^{i \in[N]} \leftarrow$ Solve \eqref{eq:QP} for all $i \in [N]$; \label{algStep:solveNE} \vspace{1mm}
\State $\{\bar{\Delta}_{t-1}^i, \bar{\Omega}_{t-1}^i\}^{i \in[N]} \leftarrow$ \eqref{eq:recur} for all $i \in [N]$;
\EndFor
\State \Return $\{Q^{i}_{t+1}, l_{t+1}^{i}, R_t^{i}\}^{i \in [N]}_{t \in [T]^-}$
\end{algorithmic}
\end{minipage}
\end{algorithm}

We will apply Algorithm~\ref{alg:ILQGG} to both numerical and driving examples in Section~\ref{sec:exp}, demonstrating that the identified cost parameters can recover the Nash equilibrium policies.

\section{Finite-sample probabilistic cost identification error bound} \label{section:learningError}
In Problem~\ref{problem:1}, we assume that the Nash equilibrium policy is known exactly. In practice, the policy may not be available. Instead, demonstration trajectories are often accessible from historical data, as shown in Fig.~\ref{fig:noiseTraj}. Let us denote a collection of trajectories by $\{X_t, U_t^{i}\}_{t \in [T]}^{i \in [N]}$, which consists of $n$ samples: \(X_t =[x_t^{(1)}, x_t^{(2)}, \cdots, x_t^{(n)}]\), \(U_t^i = [u_t^{i(1)}, u_t^{i(2)}, \cdots, u_t^{i(n)}]\). Across demonstrations, all agents follow the Nash equilibrium policy~\eqref{eq:feedbackU}, while $(x_t^{(s)}, u_t^{i(s)})$ vary for $s \in [n]$ due to the system noise in~\eqref{eq:dynamics}. We address the following problem in this section.

\begin{problem} \label{problem:2}
    Given $n$ trajectories $\{X_t, U^{i}_t\}^{i \in [N]}_{t\in[T]}$ generated by a control policy $\{K^{i*}_{t}, \alpha^{i*}_{t}\}_{t \in [T]^-}^{i \in [N]}$, determine the cost parameters $\{Q^{i}_{t+1}, l^{i}_{t+1}, R^{i}_{t}\}_{t \in [T]^-}^{i \in [N]}$, whose Nash equilibrium coincides with the original control policy.
\end{problem}

To address problem~\ref{problem:2}, we first estimate the Nash equilibrium policy from finite trajectories, which yields $\{\widehat{K}^{i*}_{t}, \widehat{\alpha}^{i*}_{t}\}^{i \in [N]}_{t \in [T]^-}$. The estimated policies are then used in Algorithm~\ref{alg:ILQGG} and yields $\{\widehat{\theta}_t^i\}_{t \in [T]^-}^{i \in [N]}$. Our goal is to upper bound the cost identification error $\|\widehat\theta_t^i-\theta_t^i\|$ given finite state and input trajectory samples. 
To this end, we first characterize the policy estimation error $\|[\widehat{K}^{i*}_{t}, \;\widehat{\alpha}^{i*}_{t}] - [K^{i*}_{t},\;\alpha^{i*}_{t}] \|$ in section \ref{sec:learnTraj}. We then provide the cost identification error bound in section~\ref{sec:learnError} by leveraging properties of the constrained least-squares problem \eqref{eq:QP}.

\subsection{Policy estimation error bound from finite trajectories} \label{sec:learnTraj}

Given $n$ trajectories, the policy parameters can be found through the following linear regression problem:
\begin{equation} \label{eq:regression}
   [\widehat{K}^{i}_t, \; \widehat{\alpha}^{i}_t] = \arg \min_{[K^{i}_t, \, \alpha^{i}_t]} \left\| 
   \begin{bmatrix}
        K^{i}_t, & \alpha^{i}_t
   \end{bmatrix}
   \begin{bmatrix}
        X_t \\ \mathbf{1}_{1\times n} 
   \end{bmatrix} + U_t^{i} \right\|_2^2.
\end{equation}

In practice, inputs (e.g., acceleration) are difficult to measure accurately. Hence, we consider accurate state observation and noisy input observations:
\begin{equation} \label{eq:noiseInput}
    u_t^{i(s)} = -K_t^{i*} x_t^{(s)} - \alpha_t^{i*} + \nu^{(s)}_t.
\end{equation}

The input observation noise $\nu_t^{(s)}$ is assumed to be a sample from a sub-Gaussian distribution with $\bbE(\nu_t^{(s)}) = 0$, which is a mild assumption that includes both Gaussian noise and noise with bounded support. 

Note that under noiseless input observations $\nu_t^{(s)} = 0$, each $\{x_t^{(s)}, u^{i(s)}_t\}$ provides an exact point on the hyperplane \(u_t^{i} = -K_t^{i*} x_t - \alpha_t^{i*}\). Hence, $\eqref{eq:regression}$ is equivalent to solving a system of linear equations with $(n_x + 1)n_u$ unknowns. The policy can be identified exactly with at least $(n_x + 1)n_u$ linearly independent state samples for all $t\in[T]^-$ and $i\in [N]$.

Problem \eqref{eq:regression} is a standard linear regression problem, for which a finite-sample probabilistic error bound exists. 

\begin{lemma}[\cite{krikheli2018finitesampleperformancelinear}] \label{lemma:policyConcentration} 
    With a sample size of $n(\epsilon, \delta) = \mathcal{O}(\epsilon^{-2} \log(\delta^{-1}))$, for every $t \in [T]^-$ and $i \in [N]$, 
\begin{equation} \label{eq:paramConcentration}
    \mathbb{P}\left( \|[\widehat{K}^{i*}_{t}, \;
        \widehat{\alpha}^{i*}_{t}] - [K^{i*}_{t}, \;
        \alpha^{i*}_{t}] \| \leq \epsilon \right) \geq 1 - \delta, 
\end{equation}
where $n(\epsilon, \delta)$ is given in~\cite[Theorem~3.1]{krikheli2018finitesampleperformancelinear} and provided in Appendix~\ref{appendix:nDef} for completeness.
\end{lemma}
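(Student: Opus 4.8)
The plan is to recognize \eqref{eq:regression} as a standard, well-specified multivariate linear regression problem and then invoke \cite[Theorem~3.1]{krikheli2018finitesampleperformancelinear} after checking that its hypotheses hold in our setting. First I would rewrite the regression in canonical form: for fixed $t$ and $i$, stack the $s$-th regressor as $z_t^{(s)} = [x_t^{(s)\top},\,1]^\top \in \mathbb{R}^{n_x+1}$ and take $-u_t^{i(s)}$ as the $s$-th response, so that the unknown parameter is $\Theta_t^i := [K_t^{i*},\,\alpha_t^{i*}]$ and, by the observation model \eqref{eq:noiseInput}, $-u_t^{i(s)} = \Theta_t^i z_t^{(s)} - \nu_t^{(s)}$ with $\mathbb{E}[\nu_t^{(s)}] = 0$. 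Since the $n$ demonstrations use independent system-noise realizations $\{\omega_t^{(s)}\}_t$ and independent input-measurement noise $\{\nu_t^{(s)}\}$, the pairs $(z_t^{(s)},\nu_t^{(s)})$ are i.i.d.\ across $s \in [n]$, the conditional mean $\mathbb{E}[-u_t^{i(s)}\mid x_t^{(s)}]$ is linear in $z_t^{(s)}$, and the residuals are zero-mean sub-Gaussian by the assumption stated after \eqref{eq:noiseInput}. This is precisely the regime covered by the cited theorem.

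The two substantive hypotheses to verify are (i) that the regressor distribution has a positive-definite second-moment matrix $\Sigma_{z,t} := \mathbb{E}[z_t z_t^\top]$, and (ii) sub-Gaussianity and (conditional) mean-zero of the residuals. Item (ii) is immediate from the noise assumption together with independence of $\nu_t^{(s)}$ from $x_t^{(s)}$. For (i), I would note that under the Nash policy \eqref{eq:feedbackU} the dynamics \eqref{eq:dynamics} become the closed loop $x_{t+1} = F_t x_t - \sum_{j} B_t^j \alpha_t^{j*} + \omega_t$, so $x_t$ is Gaussian, say $x_t \sim \mathcal{N}(\bar\mu_t, \bar\chi_t)$, whence
\begin{equation*}
\Sigma_{z,t} = \begin{bmatrix} \bar\chi_t + \bar\mu_t\bar\mu_t^\top & \bar\mu_t \\ \bar\mu_t^\top & 1 \end{bmatrix},
\end{equation*}
which is positive definite iff $\bar\chi_t \succ 0$. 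By Assumption~\ref{assump:CLinverse} each $F_t$ is invertible, so $\bar\chi_t = (F_{t-1}\cdots F_0)\,\chi_0\,(F_{t-1}\cdots F_0)^\top + \sum_{s<t}(F_{t-1}\cdots F_{s+1})\,\Sigma_s\,(F_{t-1}\cdots F_{s+1})^\top \succ 0$ whenever $\chi_0 \succ 0$ (or some $\Sigma_s \succ 0$); I would carry this mild non-degeneracy condition as a standing assumption of the section.

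With (i) and (ii) in place, \cite[Theorem~3.1]{krikheli2018finitesampleperformancelinear} gives, for each fixed $(t,i)$, that with $n \ge n(\epsilon,\delta) = \mathcal{O}(\epsilon^{-2}\log(\delta^{-1}))$ samples the ordinary-least-squares estimate $\widehat\Theta_t^i$ obeys $\|\widehat\Theta_t^i - \Theta_t^i\| \le \epsilon$ with probability at least $1-\delta$: the $\log(\delta^{-1})$ factor arises from the sub-Gaussian tail of the cross term $\tfrac1n\sum_s z_t^{(s)}\nu_t^{(s)\top}$ and the $\epsilon^{-2}$ rate from the $1/\sqrt n$ concentration of the empirical Gram matrix around $\Sigma_{z,t}$. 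Identifying $[\widehat K_t^{i*},\widehat\alpha_t^{i*}] = \widehat\Theta_t^i$ yields \eqref{eq:paramConcentration}. The statement is per-pair $(t,i)$; a guarantee that holds simultaneously over all $TN$ pairs follows by a union bound, replacing $\delta$ with $\delta/(TN)$ and thereby only rescaling the constant hidden in $n(\epsilon,\delta)$.

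The main obstacle is hypothesis (i): establishing that the closed-loop state covariance $\bar\chi_t$ is non-degenerate so that the population Gram matrix $\Sigma_{z,t}$ is invertible and the least-squares problem is well-conditioned for every $t$. This is where Assumption~\ref{assump:CLinverse} and the non-degeneracy of $\chi_0$ (or of the process noise) are essential; once this is secured, the remainder is a direct translation of \eqref{eq:regression}–\eqref{eq:noiseInput} into the notation of the cited linear-regression result.
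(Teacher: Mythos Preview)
Your proposal is correct but does considerably more than the paper. In the paper, Lemma~\ref{lemma:policyConcentration} is not proved at all: it is stated as a direct citation of \cite[Theorem~3.1]{krikheli2018finitesampleperformancelinear}, with Appendix~\ref{appendix:nDef} merely reproducing the sample-complexity formula for completeness. No verification of the hypotheses of that theorem is carried out in the paper.

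Your write-up, by contrast, recasts \eqref{eq:regression}--\eqref{eq:noiseInput} in the canonical regression form required by the cited result and checks the two substantive assumptions (sub-Gaussian zero-mean residuals and nondegenerate regressor second moment). The observation that $\Sigma_{z,t}\succ 0$ requires $\bar\chi_t\succ 0$, and hence $\chi_0\succ 0$ or some $\Sigma_s\succ 0$, is a genuine condition the paper leaves implicit; flagging it is a useful addition rather than a gap. Likewise, your remark that uniformity over all $(t,i)$ follows by a union bound goes beyond the paper, which states the bound per pair. In short: same route (invoke the cited linear-regression concentration result), but you supply the hypothesis-checking the paper omits.
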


\subsection{Cost identification error bound from finite trajectories} \label{sec:learnError}
In this subsection, we study how the probabilistic error \eqref{eq:paramConcentration} in the estimated policy  propagates to the identified parameters $\{\widehat{\theta}^i_t\}^{i \in [N]}_{t \in [T]^-}$ found in Algorithm~\ref{alg:ILQGG}. 

Let the active set in~\eqref{eq:QP} denote the indices of constraints that are satisfied with equality \(\mathcal{A}_t^i(\theta_t^{i}) := \big\{ j \,\big|\, (D\theta_t^{i})_j = \tau \big\}.\) We introduce the following assumptions for analyzing the cost identification error bound $\|\widehat{\theta}^{i}_t - \theta^{i}_t\|$. 

\begin{assumption} \label{assump:activeSet}
The active set of \eqref{eq:QP} remains unchanged. In particular, \(\mathcal{A}_t^i(\widehat{\theta}_t^i) = \mathcal{A}_t^i(\theta_t^{i}), \; \forall t \in [T]^-, \; i \in [N].\)
\end{assumption}

Assumption~\ref{assump:activeSet} is standard in sensitivity analysis \cite{Daniel1973, Lotstedt1983}. It prevents discontinuous changes in the optimal solution due to the change of the feasible region of \eqref{eq:QP}.

Next, we present our main result, which establishes a probabilistic bound on the identification error of the cost parameters obtained from the estimated Nash equilibrium policies.

\begin{theorem}\label{theorem:costBound}
Under assumption~\ref{assump:activeSet}, and given $n \ge n(\epsilon, \delta)$ trajectories as per Lemma~\ref{lemma:policyConcentration}, the following bound holds for the cost parameter $\theta_t^{i}$ identified from the exact Nash policy and its estimate $\widehat{\theta}_t^i$ from $n$ trajectories: \vspace{-2mm}

{\small
\begin{equation} \label{eq:boundforT}
\mathbb{P}\!\left(\|\widehat{\theta}_t^i - \theta_t^{i}\| = \mathcal{O}(\epsilon)\right) \ge (1 - \delta)^{T-t+1}, \;\;\; \forall t \in [T],\; i \in [N].
\end{equation}}
\end{theorem}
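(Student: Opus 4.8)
The plan is to propagate the policy estimation error of Lemma~\ref{lemma:policyConcentration} through the backward recursion of Algorithm~\ref{alg:ILQGG}, tracking how it accumulates from $t = T$ down to the target timestep $t$. I would argue by backward induction on $t$. The base case is $t = T$: here $\bar{\Delta}_T^i = 0$ and $\bar{\Omega}_T^i = 0$ are exact, so $M_T^i$ depends only on the (estimated) dynamics-derived quantities $S_{T-1}^i$, $K_{T-1}^{i*}$, $E_{T-1}^i$. Since $S_{T-1}^i = F_{T-1}^\top \otimes (B_{T-1}^i)^\top$ and $F_{T-1} = A_{T-1} - \sum_j B_{T-1}^j K_{T-1}^{j*}$, an $\epsilon$-error in $\{\widehat{K}_{T-1}^{j*}, \widehat{\alpha}_{T-1}^{j*}\}$ gives, by Lipschitz continuity of these algebraic maps on a neighborhood (the entries are polynomials in $K$ and $\alpha$, with $F^{-1}$ smooth where $F$ is invertible by Assumption~\ref{assump:CLinverse}), a bound $\|\widehat{M}_T^i - M_T^i\| = \mathcal{O}(\epsilon)$. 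Conditioned on the event in~\eqref{eq:paramConcentration} holding for all $N$ players at time $T-1$ — probability at least $(1-\delta)$ after a union bound, or more carefully $(1-\delta)$ per the lemma's per-$(t,i)$ statement absorbed into constants — I then invoke a sensitivity result for the constrained least-squares problem~\eqref{eq:QP}.

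\textbf{Sensitivity of the constrained least squares.} The core lemma I need is: if $\widehat{M}$ is a perturbation of $M$ with $\|\widehat{M} - M\| = \mathcal{O}(\epsilon)$, and Assumption~\ref{assump:activeSet} holds (the active set is unchanged), then the minimizers satisfy $\|\widehat{\theta} - \theta\| = \mathcal{O}(\epsilon)$. With a fixed active set, \eqref{eq:QP} reduces on that active face to an equality-constrained least-squares problem whose solution is given by the KKT system, a linear system in $(\theta, \text{multipliers})$ whose coefficient matrix is built from $M^\top M$ and the rows of $D$ in the active set. Standard perturbation theory for linear systems (or for the least-squares pseudoinverse, cf.\ \cite{Daniel1973, Lotstedt1983}) then yields the $\mathcal{O}(\epsilon)$ bound, provided the KKT matrix is nonsingular and stays nonsingular under the perturbation — which holds for $\epsilon$ small since nonsingularity is an open condition, and the constant in the $\mathcal{O}(\cdot)$ depends on the norm of the KKT inverse and on $\|\theta\|$, $\|M\|$, all fixed problem data. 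This establishes $\|\widehat{\theta}_T^i - \theta_T^i\| = \mathcal{O}(\epsilon)$ on an event of probability $\ge (1-\delta)$, which is the claim for $t = T$.

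\textbf{Inductive step.} Assume $\|\widehat{\theta}_{t+1}^i - \theta_{t+1}^i\| = \mathcal{O}(\epsilon)$ on an event $\mathcal{E}_{t+1}$ of probability $\ge (1-\delta)^{T-t}$. From $\theta_{t+1}^i$ one forms $P_{t+1}^{i*}$ and $\zeta_{t+1}^{i*}$ via the update formulas in Proposition~\ref{prop:characterization}; these are again Lipschitz (polynomial) functions of $\theta_{t+1}^i$ and of $F_t, K_t^{i*}, \alpha_t^{i*}$, so an $\mathcal{O}(\epsilon)$ error in $\widehat\theta_{t+1}^i$ together with an $\mathcal{O}(\epsilon)$ error in the estimated policy at time $t$ gives $\|\widehat{P}_{t+1}^{i*} - P_{t+1}^{i*}\|, \|\widehat{\zeta}_{t+1}^{i*} - \zeta_{t+1}^{i*}\| = \mathcal{O}(\epsilon)$. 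Feeding these into \eqref{eq:recur} gives $\|\widehat{\bar\Delta}_t^i - \bar\Delta_t^i\|, \|\widehat{\bar\Omega}_t^i - \bar\Omega_t^i\| = \mathcal{O}(\epsilon)$, hence $\|\widehat{M}_t^i - M_t^i\| = \mathcal{O}(\epsilon)$ (combining with the direct dependence of $M_t^i$ on the time-$(t-1)$ policy). Applying the constrained-least-squares sensitivity lemma again yields $\|\widehat{\theta}_t^i - \theta_t^i\| = \mathcal{O}(\epsilon)$ on $\mathcal{E}_{t+1}$ intersected with the event that~\eqref{eq:paramConcentration} holds at time $t-1$ for all players; the latter has probability $\ge (1-\delta)$, so by independence (or a union bound folded into $\delta$) the intersection has probability $\ge (1-\delta)^{T-t+1}$, completing the induction and giving~\eqref{eq:boundforT}.

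\textbf{Main obstacle.} The delicate part is the constrained-least-squares sensitivity lemma: making precise that, under Assumption~\ref{assump:activeSet}, the map from the data matrix $M$ to the minimizer $\theta$ of~\eqref{eq:QP} is locally Lipschitz with a constant uniform over the relevant $\epsilon$-ball. One must verify that the reduced KKT matrix is invertible (this uses that $M_t^i$ has enough rank on the active face together with the structure of $D$), that it remains invertible under $\mathcal{O}(\epsilon)$ perturbations, and that the perturbed minimizer indeed still has the same active set (this is exactly what Assumption~\ref{assump:activeSet} grants, so it need not be re-derived). A secondary but routine obstacle is bookkeeping the Lipschitz constants through the nested compositions (dynamics $\to F_t \to$ $S_t^i, E_t^i$; $\theta_{t+1} \to P_{t+1}, \zeta_{t+1} \to \bar\Delta_t, \bar\Omega_t \to M_t$) so that the hidden constants, while growing with $T - t$, remain finite for each fixed horizon — which is all that $\mathcal{O}(\epsilon)$ in~\eqref{eq:boundforT} requires.
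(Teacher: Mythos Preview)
Your proposal is correct and follows essentially the same approach as the paper: backward induction on $t$, propagating $\mathcal{O}(\epsilon)$ policy errors through the Lipschitz (polynomial/Kronecker/inverse) maps that build $M_t^i$, then invoking the constrained least-squares perturbation result of \cite{Lotstedt1983} under Assumption~\ref{assump:activeSet} to get $\|\widehat\theta_t^i - \theta_t^i\| = \mathcal{O}(\epsilon)$, and multiplying the per-step $(1-\delta)$ probabilities. The paper additionally states the rank condition $\operatorname{rank}(M_t^i) = \operatorname{rank}(\widehat M_t^i)$ when citing \cite[Theorem~4]{Lotstedt1983}, which you should make explicit alongside the active-set assumption, but otherwise your outline matches the paper's proof and its appendix lemmas.
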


Before presenting the proof, we provide some insights to Theorem~\ref{theorem:costBound}. The cost identification error scales linearly with the policy estimation error, ensuring that small policy inaccuracies do not lead to huge errors in the identified cost parameters. Observe that the probability bound in~\eqref{eq:boundforT} degrades over time. The reason is that errors accumulate through the finite-horizon backpropagation in Algorithm~\ref{alg:ILQGG}, since the identification of each $\theta_t^i$ depends on $\theta_{t+1}^i$. Nevertheless, with an arbitrarily small $\delta$, the bound could hold with high probability.

\begin{proof}
First, it is straightforward to show that given \eqref{eq:paramConcentration},
\begin{align}
   & \|\widehat{K}^i_t - K^{i*}_t\| = \mathcal{O}(\epsilon) \wedge \|\widehat{\alpha}_t^{i} - \alpha_t^{i*}\| = \mathcal{O}(\epsilon)  \label{eq:paramBound}
\end{align}
holds with probability $1-\delta$ for every $t \in [T]^-$ and $i \in [N]$. We can then show that for every $t \in [T]^-$ and $i \in [N]$ (the proofs of \eqref{eq:paramBound} and \eqref{eq:rightImply} are provided in Appendix~\ref{appendix:exProof}):
{\small
\begin{align}
\left[ \eqref{eq:paramBound} \wedge \|\widehat{\theta}_{t+1}^i - \theta_{t+1}^{i}\| = \mathcal{O}(\epsilon) \right] \Rightarrow \|\widehat{M}_t^i - M_t^i\| = \mathcal{O}(\epsilon). \label{eq:rightImply}
\end{align}}

Under assumption~\ref{assump:activeSet}, if $\operatorname{rank}(M_t^i) = \operatorname{rank}(\widehat{M}_t^i)$, for all $t \in [T]^-$ and $i \in [N]$~\cite[Theorem~4]{Lotstedt1983}:
\begin{equation} \label{eq:MthetaBound}
   \|\widehat{M}^{i}_t - M^{i}_t\| = \cO(\epsilon) \; \Rightarrow \; \|\widehat\theta_t^i-\theta_t^i\| \; = \; \cO(\epsilon).
\end{equation}

We now prove the probabilistic bound \eqref{eq:boundforT} by induction. 

\textit{Base case.}  
At $t = T$, $\bar{\Delta}_T^i = \bar{\Omega}_T^i = 0$, $M_{T}^i$ only depends on $\{K^{i*}_{T-1}, \alpha_{T-1}^{i*}\}^{i \in [N]}$ and dynamics. Given \eqref{eq:paramBound} holds with probability $1-\delta$, we have \(\bbP \left(\|\widehat{M}_{T}^i - M_{T}^i\| = \mathcal{O}(\epsilon)\right) \ge 1 - \delta\). By~\eqref{eq:MthetaBound}, we have \(\bbP\left(\|\widehat{\theta}_{T}^i - \theta_{T}^{i}\| = \mathcal{O}(\epsilon)\right) \ge 1 - \delta\). \vspace{1mm}

\textit{Induction step.}  
For a given $t \in [T-1]$, let us suppose \(\bbP\left(\|\widehat{\theta}_{t+1}^i - \theta_{t+1}^{i}\| = \mathcal{O}(\epsilon)\right) \geq (1 - \delta)^{T-t}\). Furthermore, as shown above \eqref{eq:paramBound} holds with probability $1-\delta$. It follows from \eqref{eq:rightImply} that \(\bbP\left(\|\widehat{M}_t^i - M_t^i\| = \mathcal{O}(\epsilon)\right) \ge (1 - \delta)^{T-t+1}.\) By~\eqref{eq:MthetaBound}, we have \(\bbP\left(\|\widehat{\theta}_t^i - \theta_t^{i}\| = \mathcal{O}(\epsilon)\right)\ge(1 - \delta)^{T-t+1}\), completing the induction.
\end{proof}

\section{Case studies} \label{sec:exp}
In this section, we aim to a) test whether Algorithm~\ref{alg:ILQGG} can identify cost parameters that reproduce given control policies, and b) empirically examine how the number of demonstrations affects the accuracy of the recovered policy and trajectories from the identified cost parameters. To address a), we test our method in a numerical example (Section~\ref{sec:num}), and to address b), we identify costs in a multi-vehicle driving scenario with varying numbers of demonstrations (Section~\ref{sec:driving}).

\subsection{Numerical example} \label{sec:num}
\begin{figure}[t]
    \centering
    \includegraphics[width=\linewidth]{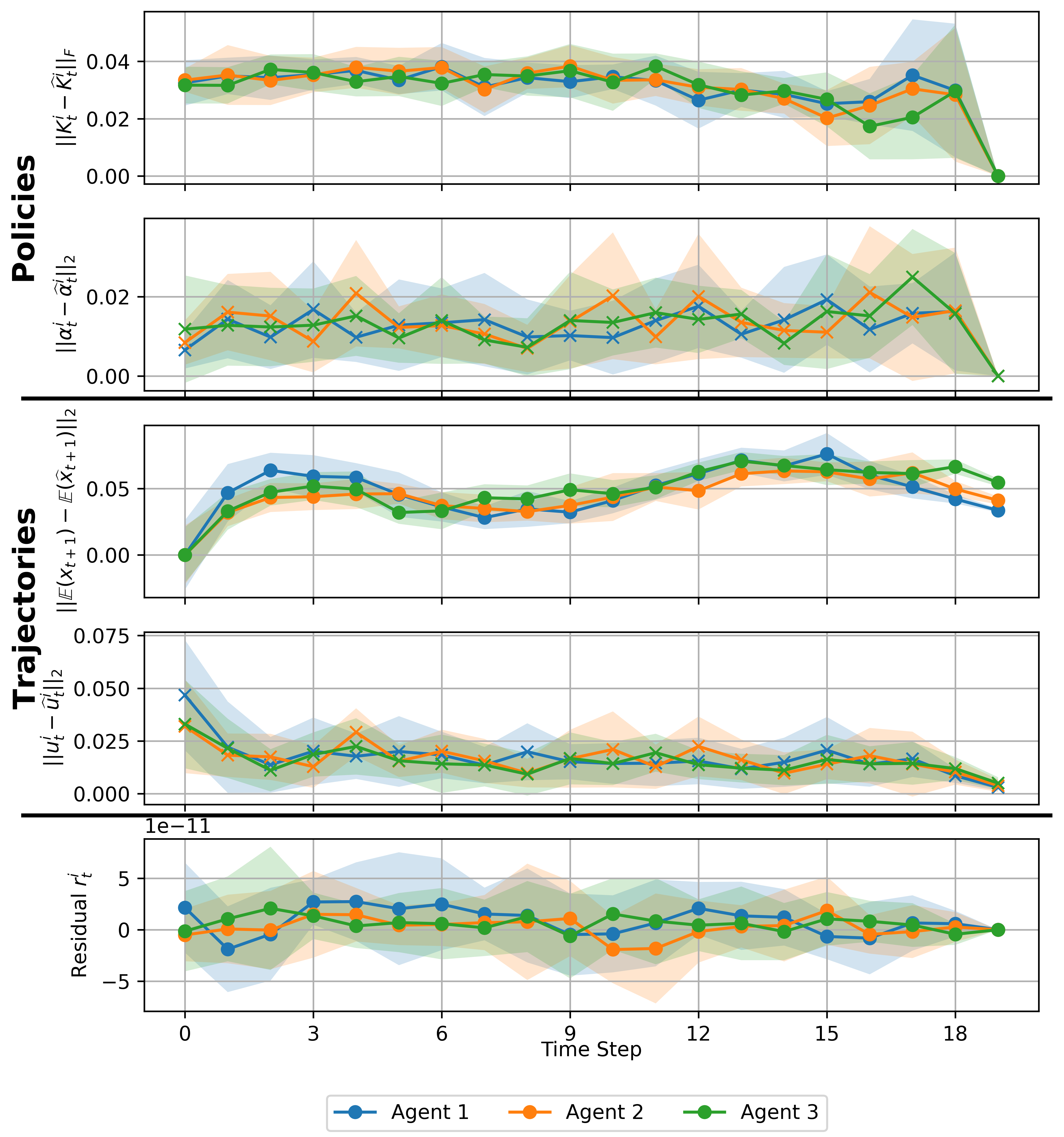} \vspace{-3mm}
    \caption{Optimization residual and the deviations between the recovered (from the identified costs) and true policy, state, and input trajectories across 100 randomized cost matrices.}
    \label{fig:norm_diff_scalar}
\end{figure}

We conducted a numerical simulation of an LQG game \eqref{problem:unconstrainedLQ}. The game is defined with $N=3$ players and states $x_t \in \bbR^3$. The initial state $x_0$ is drawn from a Gaussian distribution \eqref{eq:Gaussian} with randomly generated $\mu_0$. The input for each player is $u^i_t \in \bbR$, and the time horizon is $T=20$. We choose dynamic matrices for all $t \in [T]^-$
\[
A_t = \begin{bmatrix}
    1 & 0 & 0\\
    0 & 1 & 0\\
    0 & 0 & 1
\end{bmatrix}, \quad B^1_t = \begin{bmatrix}
    1 \\
    0 \\
    0
\end{bmatrix}, \; B^2_t = \begin{bmatrix}
    0 \\
    1 \\
    0
\end{bmatrix}, \; B^3_t = \begin{bmatrix}
    0 \\
    0 \\
    1
\end{bmatrix}.
\]

\setcounter{figure}{3}
\begin{figure*}[t] \vspace{-5mm}
    \centering
    \includegraphics[width=0.97\linewidth]{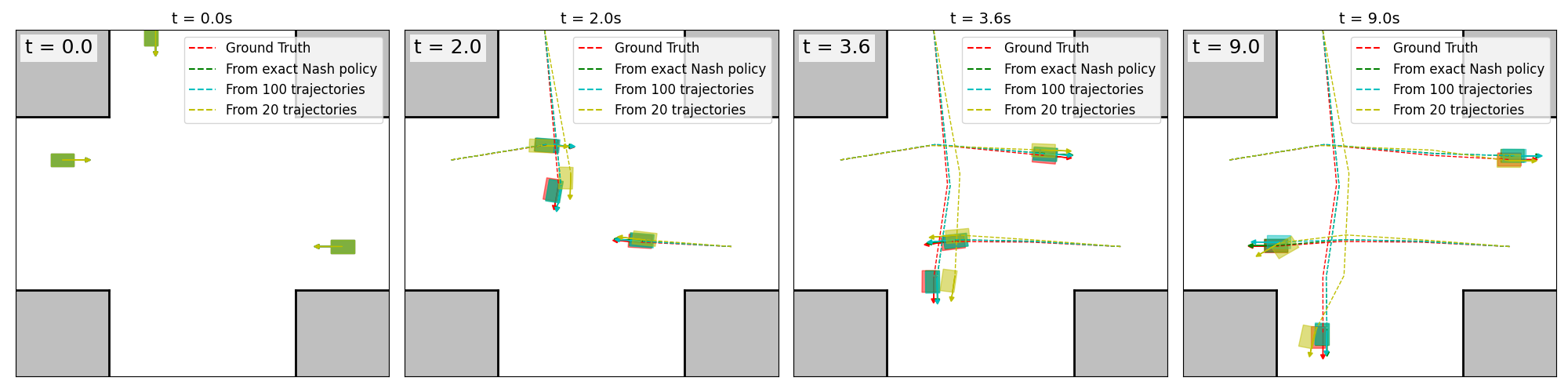} \vspace{-1mm}
    \caption{Expected trajectories generated by the true costs and the identified costs from the exact Nash policy, 100 demonstrations and 20 demonstrations, respectively. The trajectories recovered from the exact policy or 100 demonstrations are close to the ground-truth. With 20 demonstrations, the recovered trajectories become inaccurate.}
    \label{fig:trajectory_car}
\end{figure*}

We generated 100 randomized LQG games, where in each game, the cost matrices $Q_t^i, R_t^i$ are diagonal with random positive entries. The linear cost vector is $l_t^i = \rho_1 [1, -1, 0] + \rho_2[0, 1, -1]$, with randomly positive weights $\rho_1, \rho_2$.

\setcounter{figure}{2}
\begin{figure}[t]
    \centering
    \includegraphics[width=0.9\linewidth]{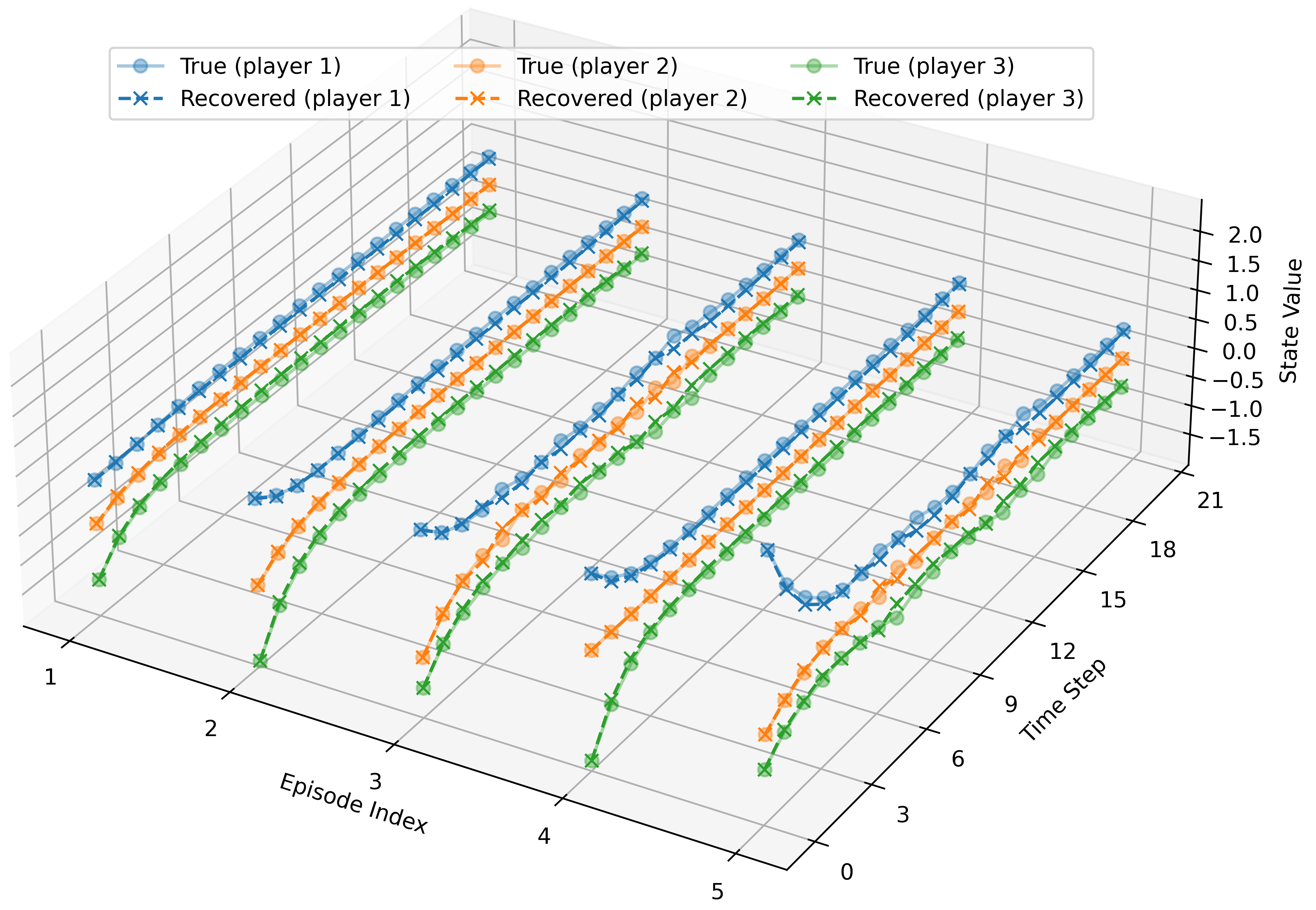}
    \caption{Five exemplary episodes for section~\ref{sec:num}, where we compare the trajectories recovered from the identified cost parameters with the ground-truth. The recovered trajectories closely follow the ground-truth in all episodes.}
    \label{fig:trajectories_num}
\end{figure}

In each game, we first obtain the Nash equilibrium policy. Then, we use Algorithm~\ref{alg:ILQGG} to identify the cost parameters from the Nash policy. From the identified costs, we recover the policy and trajectories. Figure~\ref{fig:norm_diff_scalar} illustrates the optimization residuals defined in~\eqref{residual}, and the norm differences between the recovered and the true policy, state, and input trajectories. The optimization residuals remain near zero, up to negligible numerical errors, indicating that the identified parameters lie in the solution set \eqref{eq:vec}. Furthermore, the recovered policy parameters, states, and control inputs closely match the ground truth across 100 randomized episodes with varying cost matrices.

Fig.~\ref{fig:trajectories_num} presents five examples of the recovered three-player trajectories obtained from the identified cost parameters, compared against the ground-truth. The recovered trajectories align closely with the true ones, showing that the identified costs successfully reproduce the true multi-agent behaviors.

\subsection{Multi-vehicle driving scenario} \label{sec:driving}
In this subsection, we apply our cost identification method to a three-vehicle interaction scenario at a cross intersection~\cite[Section~IV.A]{Ren2025}. We compare the performance of cost identification from varying number of demonstrations.

As shown in Fig.~\ref{fig:trajectory_car}, at $t=0$, three vehicles arrive at an unsignalized intersection and aim to cross as quickly as possible while avoiding collisions. This interaction can be modeled as an LQG game~\eqref{problem:unconstrainedLQ}, where collision avoidance is considered as a linear cost via local approximation \cite[Lemma 1]{Ren2025}. We first solve the Nash equilibrium policies of all vehicles.

Given the Nash equilibrium policies, we generate demonstration trajectories under different realizations of system noise~\eqref{eq:Gaussian} and input observation noise~\eqref{eq:noiseInput}. We construct datasets containing 20 and 100 demonstrations. We then identify cost parameters from the exact policy, 20 samples, and 100 samples, respectively. When using demonstration data, the policies are first estimated via~\eqref{eq:regression}. Algorithm~\ref{alg:ILQGG} is then applied to identify the costs, from which the corresponding control policies and trajectories are reconstructed. 

Table~\ref{tab:performance} summarizes the performance of cost identification across different sample sizes. In particular, we measure the mean (for example, \(\bar{\alpha} = 1/(NT) \sum_{t=0}^{T-1}\sum_{i=1}^N \|\widehat{\alpha}^i_t - \alpha_t^{i*}\|_2\)) and standard deviation of the error between the reconstructed (from the identified costs) and the Nash equilibrium policies. With access to the exact policy or 100 demonstrations, both the policy and trajectory errors remain small. When using only 20 demonstrations, the errors increase substantially. This indicates a small sample size degrades cost identification and trajectory reconstruction accuracy.

\begin{table}[t]
\caption{Mean and standard deviation of the policy and trajectory errors across varying sample size.} \vspace{-1mm}
\centering
\begin{tabular}{| l | r | r | r | r | r |}
    \hline
     & \makecell[r]{\vspace{-1mm} \textbf{$\bar{K}$}} & $\bar{\alpha}$ & $\bar x$ & \(\bar{u}\) \\[0.5ex]  \hline
    Exact policy & 0.17 $\pm$ 0.07  & 0.05 $\pm$ 0.01 & 0.07 $\pm$ 0.05 & 0.02 $\pm$ 0.03 \\ \hline
    $n = 100$ & 0.19 $\pm$ 0.15 & 0.14 $\pm$ 0.07 & 0.10 $\pm$ 0.04 & 0.13 $\pm$ 0.07  \\ \hline
    $n = 20$ &0.28 $\pm$ 0.08  & 0.82 $\pm$ 0.52 & 0.47 $\pm$ 0.37 & 0.73 $\pm$ 0.52 \\ \hline
\end{tabular}
\label{tab:performance} 
\end{table}

Figure~\ref{fig:trajectory_car} compares the expected trajectories (under $\omega_t = 0$ for all $t \in [T]^-$) generated by the reconstructed policies with the ground truth. The trajectories recovered from the exact policy and from 100 samples closely match the ground truth. The trajectories recovered from 20 samples deviate noticeably, indicating that the policy was not accurately recovered.

\section{Conclusion}
We studied the cost identification problem in a finite-horizon linear quadratic Gaussian game. We characterized the set of cost parameters that generate given Nash equilibrium policies and developed optimization-based algorithms for identifying time-varying costs. We derived probabilistic bounds on the cost identification error when the Nash equilibrium policy is estimated from finite trajectories. Through numerical and driving simulations, we demonstrated that the proposed method recovers the trajectories corresponding to the Nash equilibrium when the exact policies or sufficient demonstrations are given. Our future direction is to identify costs in constrained linear quadratic Gaussian games.

\bibliographystyle{IEEEtran}
\bibliography{main}

\appendices

\section{Proof of Proposition~\ref{prop:characterization}} \label{appendiex:unconstrainedILQG}

The feedback Nash equilibrium of \eqref{problem:unconstrainedLQ} can be found via the coupled backward recursions of the following equations \cite{Basar1998}.
{\small
\begin{align}
&\textstyle K^{i*}_t = \left(R_{t}^{i} + (B_{t}^{i})^\top P^{i*}_{t+1} B_{t}^{i} \right)^{-1} (B_{t}^{i})^\top P^{i*}_{t+1} \left( A_t - \sum_{j \neq i} B_t^j K_t^{j*} \right), \label{recatti:K} \\
&\textstyle \alpha^{i*}_t = \left(R_{t}^{i} + (B_{t}^{i})^\top P^{i*}_{t+1} B_{t}^{i} \right)^{-1} (B_{t}^{i})^\top \left( \zeta^{i*}_{t+1} - P^{i*}_{t+1} \sum_{j \neq i} B_t^j \alpha_t^{j*} \right),  \label{recatti:alpha} \\
&\textstyle P^{i*}_t = F_t^\top P^i_{t+1} F_t + (K^{i*}_t)^\top R_t^i K^{i*}_t  + Q_{t}^{i}, \label{recatti:P} \\
&\textstyle \zeta^{i*}_t = F_t^\top \left( \zeta^{i*}_{t+1} - P^i_{t+1} \sum_{j = 1}^N B_t^j \alpha_t^{j*} \right) + (K^{i*}_t)^\top R_t^i \alpha^{i*}_t + l^i_t, \label{recatti:zeta} \\[-6pt]
& \textstyle P^{i*}_T = Q_{T}^{i}, \quad \zeta^{i*}_T = l_T^i. \notag
\end{align}
}

Given the dynamics, the Nash policy $\{K^{i*}_{t}, \alpha^{i*}_{t}\}_{t \in [T]^-}^{i \in [N]}$, and the cost parameters $\{Q_{t+1}^i, l_{t+1}^i, R_{t}^i\}^{i \in [N]}$ from one step further, we aim to find $\{Q_t^i, l_t^i, R_{t-1}^i\}^{i \in [N]}$ at each $t \in [T]$. We achieve this by rearranging \eqref{recatti:K}-\eqref{recatti:zeta} to construct equations with $Q_t, l_t, R_{t-1}$ being the only unknowns. 

First, we aim to write the value function parameters $P_{t}^{i*}, \zeta_{t}^{i*}$ in terms of $R^i_{t-1}$. Given $F_t = A_t-\sum_{j = 1}^N B_t^j K_t^{j*}$, by rearranging the terms of \eqref{recatti:K}, we get for all $t\in[T]^-$
\begin{align}
&\left(R_{t}^{i} + (B_{t}^{i})^\top P^{i*}_{t+1} B_{t}^{i} \right) \textstyle K^{i*}_t = (B_{t}^{i})^\top P^{i*}_{t+1} \left( A_t - \sum_{j \neq i} B_t^j K_t^{j*} \right), \notag \\
&\Leftrightarrow R_{t}^{i} \textstyle K^{i*}_t = (B_{t}^{i})^\top P^{i*}_{t+1} \left( A_t - \sum_{j = 1}^N B_t^j K_t^{j*} \right), \notag \\
& \Leftrightarrow R_t^i K_t^i = {B_t^i}^\top P_{t+1}^{i*} F_t, \notag\\
&\Leftrightarrow R_t^i {K_t^{i*}} F_t^{-1} = {B_t^i}^\top P_{t+1}^{i*}. \label{eq:BP}
\end{align}

By rearranging the terms of \eqref{recatti:alpha}, we get for all $t\in[T]^-$
\begin{align}
&\textstyle \left(R_{t}^{i} + (B_{t}^{i})^\top P^{i*}_{t+1} B_{t}^{i} \right)\alpha^{i*}_t = (B_{t}^{i})^\top \left( \zeta^{i*}_{t+1} - P^{i*}_{t+1} \sum_{j \neq i} B_t^j \alpha_t^{j*} \right),  \notag \\
&\Leftrightarrow \textstyle \; R_t^i \alpha_t^{i*} = {B_t^i}^\top \left( \zeta_{t+1}^{i*} - P_{t+1}^{i*} \sum_{j=1}^N B_t^j \alpha_t^{j*} \right) \notag\\
& \Leftrightarrow \textstyle \; R_t^i \alpha_t^{i*} + {B_t^i}^\top P_{t+1}^{i*} \sum_{j=1}^N B_t^j \alpha_t^{j*} = {B_t^i}^\top \zeta_{t+1}^{i*}  \notag\\
&\textstyle \stackrel{\eqref{eq:BP}}\Leftrightarrow   \; R_t^i \underbrace{ \textstyle 
 \left[\alpha_t^{i*} + K_t^{i*} F_t^{-1} \sum_{j=1}^N B_t^j \alpha_t^{j*}\right]}_{E_t^i \in \bbR^{n_u}} = {B_t^i}^\top \zeta_{t+1}^{i*}. \label{eq:Bzeta}
\end{align}

For the last step $T$, we have $P^{i*}_T = Q_{T}^{i}$ and $\zeta^{i*}_T = l_T^i$. We aim to replace $P_T^{i*}$ and $\zeta^{i*}_T$ with terms of $R_{T-1}^i$, to construct equations with $Q_T^i, l_T^i$ and $R_{T-1}^i$ being the only unknowns.

Left-multiplying both sides of $P^{i*}_T = Q_{T}^{i}$ by $B_{T-1}^{i^\top}$ gives
\begin{align}
    & B_{t-1}^{i^\top} Q_T^i = B_{T-1}^{i^\top} P_T^{i*} \notag\\
    \stackrel{\eqref{eq:BP}}{\Leftrightarrow} \;& B_{T-1}^{i^\top} Q_T^i = R_{T-1}^i K_{T-1}^{i*} F_{T-1}^{-1}. \notag \\
    \Leftrightarrow \;& B_{T-1}^{i^\top} Q_T^i F_{T-1} - R_{T-1}^i K_{T-1}^{i*} = 0_{n_u \times n_x}. \label{eq:Tqr}
\end{align}

To construct an equation with $l_T^i$ and $R_{T-1}^i$ being the unknowns, we left-multiply $B_{T-1}^{i^\top}$ on both sides of \(\zeta^{i*}_T = l_T^i\),
\begin{align}
&\; B_{T-1}^{i^\top} \zeta_T^{i*} = B_{T-1}^{i^\top} l_T \;\;
\stackrel{\eqref{eq:Bzeta}}\Leftrightarrow \;\; B_{T-1}^{i^\top} l_T - R_{T-1}^i E_{T-1}^i = 0_{n_u \times 1}. \label{eq:Tlr}
\end{align}

Note that \eqref{eq:Tqr} and \eqref{eq:Tlr} are linear equations in $Q_T^i, l_T^i$ and $R_{T-1}^i$. By vectorizing \eqref{eq:Tqr} and \eqref{eq:Tlr}, we get,
\begin{subequations}
\begin{alignat*}{2} 
    & \text{vec}(B_{T-1}^{i \top} Q_T^i F_{T-1}) - \text{vec}\left( R_{T-1}^i K_{T-1}^{i*} \right) = 0_{n_u n_x \times 1}, \\
    & \textstyle \text{vec}(B_{T-1}^{i \top} l_T^i) - \text{vec}\left(R_{T-1}^i E_{T-1}^i\right) = 0_{n_u \times 1}.
\end{alignat*}
\end{subequations}

Using the vectorization identities\footnote{Vectorization properties $\text{vec}(ABC) = (C^\top \otimes A)\text{vec}(B)$ and its extension of $\text{vec}(AB) = (I_n \otimes A)\text{vec}(B) = (B^\top \otimes I_m)\text{vec}(A)$ given $AB = I_m AB = ABI_n \in \bbR^{m \times n}$}, we can write the above equations as $M_T^i \theta_T^i = 0_{n_u(n_x+1) \times 1}$, where \(\theta_T^i = 
[
\textnormal{vec}(Q_{T}^i)^\top, \;
\textnormal{vec}(l_{T}^i)^\top,  \;
\textnormal{vec}(R_{T-1}^{i})^\top,  \;
1]^\top\) and 
{\small
\begin{align*} 
    M_T^i = \begin{bmatrix}
        F_{T-1}^\top \otimes B_{T-1}^{i \top} & 0_{n_un_x \times n_x} & - K_{T-1}^{i* \top} \otimes I_{n_u} & 0_{n_u n_x \times 1} \\
        0_{n_u \times n_x^2} & B_{T-1}^{i \top} &  - E_{T-1}^{i \top} \otimes I_{n_u} &  0_{n_u \times 1}
    \end{bmatrix}.
\end{align*}}

For $t = T-1, T-2, \dots, 1$, we repeat the above approach for $t=T$. Now we know the cost parameters $\{Q_{t+1}^i, l_{t+1}^i, R_{t}^i\}^{i \in [N]}$, from which we have $P_{t+1}^{i*}$ and $\zeta_{t+1}^{i*}$ defined in \eqref{recatti:P} and \eqref{recatti:zeta}, respectively. \vspace{1mm}

To construct an equation with $Q_t^i$ and $R_{t-1}^i$ being the unknowns, we left-multiply $B_{t-1}^{i^\top}$ on both sides of \eqref{recatti:P},
{\small
\begin{align}
    & B_{t-1}^{i^\top} P_t^{i*} = B_{t-1}^{i^\top} [ \underbrace{F_t^\top P_{t+1}^{i*} F_t + {K_t^{i*}}^\top R_t^i K_t^{i*}}_{\Delta^i_t \in \bbR^{n_x \times n_x}} + Q_t^i ] \notag\\
    \stackrel{\eqref{eq:BP}}{\Leftrightarrow} \;& R_{t-1}^i K_{t-1}^{i*} F_{t-1}^{-1} = B_{t-1}^{i^\top} \Delta^i_t + B_{t-1}^{i^\top} Q_t^i \notag\\
    \Leftrightarrow \;& B_{t-1}^{i^\top} Q_t^i F_{t-1} - R_{t-1}^i K_{t-1}^{i*} + B_{t-1}^{i^\top} \Delta^i_t F_{t-1} = 0_{n_u \times n_x}.  \label{eq:BPt}
\end{align}}

To construct an equation with $l_t^i$ and $R_{t-1}^i$ being the unknowns, we left-multiply $B_{t-1}^{i^\top}$ on both sides of \eqref{recatti:zeta},
{\small
\begin{align}
&\; B_{t-1}^{i^\top} \zeta_t^{i*} = B_{t-1}^{i^\top} F_t ( \zeta_{t+1}^{i*} - P_{t+1}^{i*} \sum_{j=1}^N B_t^j \alpha_t^{j*}) \notag \\
& \quad\quad\quad\quad\quad\quad\quad\quad  + B_{t-1}^{i^\top} (K_t^{i*})^\top R_t^i \alpha_t^{i*} + B_{t-1}^{i^\top} l_t \notag\\
\stackrel{\eqref{eq:Bzeta}}\Leftrightarrow &\; B_{t-1}^{i^\top} \underbrace{\left[ F_t ( \zeta_{t+1}^{i*} - P_{t+1}^{i*} \sum_{j=1}^N B_t^j \alpha_t^{j*} ) + (K_t^{i*})^\top R_t^i \alpha_t^{i*} \right]}_{\Omega_t \in \bbR^{n_x}} \notag \\ 
& \quad\quad\quad\quad\quad\quad = R_{t-1}^i E_{t-1}^i - B_{t-1}^{i^\top} l_t \notag\\
\Leftrightarrow &\; B_{t-1}^{i^\top} \Omega_t + B_{t-1}^{i^\top} l_t - R_{t-1}^i E_{t-1}^i = 0_{n_u \times 1}.  \label{eq:Bzetat}
\end{align}}

By vectorizing \eqref{eq:BPt} and \eqref{eq:Bzetat}, which are linear equations in $Q_t^i, l_t^i$ and $R_{t-1}^i$, we get
{\small
\begin{align*}
    & \text{vec}(B_{t-1}^{i \top} \Delta_t^i F_{t-1})  + \text{vec}(B_{t-1}^{i \top} Q_t^i F_{t-1})\\ & \hspace{12.3em}  - \text{vec}\left( R_{t-1}^i K_{t-1}^{i*}\right)  = 0_{n_u n_x \times 1}, \\
    & \text{vec}\left( B_{t-1}^{i \top} \Omega_t^i \right) +  \text{vec}(B_{t-1}^{i \top} l_t^i) - \text{vec}\left(R_{t-1}^i E_{t-1}^i\right) = 0_{n_u \times 1}.
\end{align*}}

Similar to the case of $t=T$, we can write the above equations as \eqref{eq:vec} using the vectorization identity.

\section{Supplementary proof of Theorem~\ref{theorem:costBound}} \label{appendix:exProof}
In this section, we aim to prove \eqref{eq:paramBound} and \eqref{eq:rightImply}. We use $\|\cdot\|$ to denote any vector and matrix norms that are submultiplicative (i.e., $\|AB\| \leq \|A\| \|B\|$), including $\ell_p$ norms ($p \ge 1$) and induced matrix norms. We first present the following result.
\begin{lemma} \label{lemma:blockPresv}
Let $D=[A\;B]\in\mathbb{R}^{p\times (n+m)}$ with $A\in\mathbb{R}^{p\times n}$, $B\in\mathbb{R}^{p\times m}$.
Let us denote
$\widehat A=A+\Delta A$, $\widehat B=B+\Delta B$, $\widehat D=D+\Delta D$ with
$\Delta D=[\Delta A\;\Delta B]$. We have 
\begin{align*}
    \left[\|\Delta A\|= \cO(\epsilon) \wedge \|\Delta B\|= \cO(\epsilon)\right] \Leftrightarrow \|\Delta D\|= \cO(\epsilon).
\end{align*}
\end{lemma}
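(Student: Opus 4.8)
The plan is to prove both implications directly from elementary norm inequalities, using the structure of $\Delta D$ as a horizontal concatenation of its two column blocks $\Delta A$ and $\Delta B$.

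For the direction $\norm{\Delta D}=\cO(\epsilon)\;\Rightarrow\;\norm{\Delta A}=\cO(\epsilon)\wedge\norm{\Delta B}=\cO(\epsilon)$, I would observe that $\Delta A$ and $\Delta B$ are obtained from $\Delta D$ by selecting columns, i.e. $\Delta A=\Delta D\,\Pi_A$ and $\Delta B=\Delta D\,\Pi_B$ where $\Pi_A=\big[\begin{smallmatrix}I_n\\0_{m\times n}\end{smallmatrix}\big]$ and $\Pi_B=\big[\begin{smallmatrix}0_{n\times m}\\I_m\end{smallmatrix}\big]$. Submultiplicativity gives $\norm{\Delta A}\le\norm{\Delta D}\,\norm{\Pi_A}$ and $\norm{\Delta B}\le\norm{\Delta D}\,\norm{\Pi_B}$, and $\norm{\Pi_A},\norm{\Pi_B}$ are fixed constants depending only on the chosen norm and the dimensions (in fact both equal $1$ for induced $\ell_p$ norms, and the Frobenius bound $\norm{\Delta A}_F\le\norm{\Delta D}_F$ holds directly since $\norm{\Delta A}_F^2+\norm{\Delta B}_F^2=\norm{\Delta D}_F^2$). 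Hence each of $\norm{\Delta A}$ and $\norm{\Delta B}$ is at most a constant times $\norm{\Delta D}$, so both are $\cO(\epsilon)$.

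For the converse $\norm{\Delta A}=\cO(\epsilon)\wedge\norm{\Delta B}=\cO(\epsilon)\;\Rightarrow\;\norm{\Delta D}=\cO(\epsilon)$, I would write $\Delta D=[\Delta A\;\;0_{p\times m}]+[0_{p\times n}\;\;\Delta B]$ and apply the triangle inequality, $\norm{\Delta D}\le\norm{[\Delta A\;\;0]}+\norm{[0\;\;\Delta B]}$. Zero-padding a matrix with extra columns does not increase its norm (it leaves the Frobenius norm and induced $\ell_p$ norms unchanged, and in general is bounded by a dimension-dependent constant), so $\norm{\Delta D}\le c\,(\norm{\Delta A}+\norm{\Delta B})$ for a fixed $c$, and the right-hand side is $\cO(\epsilon)$. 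Combining the two implications yields the stated equivalence.

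There is no substantive obstacle here; the only point requiring care is recording that extracting column blocks and zero-padding alter a submultiplicative norm by at most a fixed, dimension-dependent constant (indeed by a factor of exactly $1$ for the Frobenius and induced $\ell_p$ norms considered in the paper), which is immaterial to the $\cO(\cdot)$ conclusion. I would state this once and then invoke it in both directions.
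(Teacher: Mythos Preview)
Your proposal is correct and matches the paper's proof essentially verbatim: the paper uses the same selector matrices $P_A=\begin{bmatrix}I_n\\0\end{bmatrix}$, $P_B=\begin{bmatrix}0\\I_m\end{bmatrix}$ together with submultiplicativity to extract the blocks from $\Delta D$, and dismisses the other direction as ``straightforward'' (exactly your zero-padding plus triangle-inequality argument). The only cosmetic difference is that you treat the two implications in the opposite order.
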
 \vspace{0mm}

\noindent
\begin{proof}
    1) The forward direction is straightforward.

2) Reverse direction: let $P_A=\begin{bmatrix}I_n, \; 0\end{bmatrix}^\top\in\mathbb{R}^{(n+m)\times n}$ and
$P_B=\begin{bmatrix} 0, \; I_m\end{bmatrix}^\top\in\mathbb{R}^{(n+m)\times m}$.
Given \(\|\Delta D\|= \cO(\epsilon)\),
\begin{align*}
\|\Delta A\|&=\|\Delta D\,P_A\|\le \|\Delta D\|\,\|P_A\| = \cO(\epsilon), \\
\|\Delta B\|&=\|\Delta D\,P_B\|\le \|\Delta D\|\,\|P_B\| = \cO(\epsilon). \hspace{9mm}\qedhere
\end{align*} 
\end{proof}

From lemma~\ref{lemma:blockPresv}, three conclusions can be drawn as follows.
\begin{enumerate}[label=\alph*)]
    \item Equation \eqref{eq:paramBound} holds with probability $1-\delta$ given \eqref{eq:paramConcentration}; 
    \item If $\|\theta_t^i -\widehat{\theta}_t^i\| = \mathcal{O}(\epsilon)$, then $\|Q_t^i - \widehat{Q}_t^i\| = \mathcal{O}(\epsilon)$, $\|l_t^i - \widehat{l}_t^i\| = \mathcal{O}(\epsilon)$ and $\|R_{t-1}^i - \widehat{R}_{t-1}^i\| = \mathcal{O}(\epsilon)$;
    \item  If all block components of $M_t^i$ admit an $\mathcal{O}(\epsilon)$ error bound, then $\|M_t^i - \widehat{M}_t^i\| = \mathcal{O}(\epsilon)$. 
\end{enumerate}

We will directly use these results in the subsequent proofs.

The matrix $M^i_t$ consists of a left block with components defined in \eqref{eq:defNoBack} that only depends on the Nash policy and system dynamics, and a right block that also depends on $\theta^i_{t+1}$.

{\small \[ M_t^i = \left[
\begin{array}{ccc|ccc} 
        S_{t-1}^i & 0_{n_un_x \times n_x} & - K_{t-1}^{i* \top} \otimes I_{n_u} & S_{t-1}^i\bar{\Delta}_{t}^i \\
        \undermat{\text{Left block}}{0_{n_u \times n_x^2} & B_{t-1}^{i \top} &  - E_{t-1}^{i \top} \otimes I_{n_u} &} \undermat{\text{Right block}}{ B_{t-1}^{i \top}\bar{\Omega}_{t}^i}
    \end{array} \right],
    \]} \vspace{3mm}

To prove \eqref{eq:rightImply}, we show the following two conditions hold for all $t \in [T]^-$:
{\small
\begin{align} \label{eq:left}
   \eqref{eq:paramBound} \Rightarrow \begin{cases}
    \|\widehat{F}_t - F_t\| = \mathcal{O}(\epsilon) \\
    \|\widehat{S}_t^i - S_t^i\| = \mathcal{O}(\epsilon) \\
    \|K_t^{i*\top} \otimes I_{n_u} - \widehat{K}_t^{i*\top} \otimes I_{n_u}\| = \mathcal{O}(\epsilon) \\
    \|\widehat{E}_t^{i \top} \otimes I_{n_u} - E_t^{i \top} \otimes I_{n_u}\| = \mathcal{O}(\epsilon)
    \end{cases}
\end{align} \vspace{-5mm}

\begin{align}\label{eq:right}
 \left[ \eqref{eq:paramBound} \wedge \|\widehat{\theta}_{t+1}^i - \theta_{t+1}^{i}\| = \mathcal{O}(\epsilon) \right] \Rightarrow \begin{cases}
     \|\widehat{S}_{t-1}^i\widehat{\Delta}^{i}_t - S_{t-1}^i \bar\Delta^{i}_t\| = \cO(\epsilon) \\
     \|B_{t-1}^{i \top}\widehat{\Omega}^{i}_t - B_{t-1}^{i \top}\bar{\Omega}_{t}^i\| = \cO(\epsilon)
 \end{cases} 
\end{align}}

If \eqref{eq:paramBound} and $\|\widehat{\theta}_{t+1}^i - \theta_{t+1}^{i}\| = \mathcal{O}(\epsilon)$ hold, \eqref{eq:left} and \eqref{eq:right} together imply $\|M_{t}^i - \widehat M_{t}^i\| = \mathcal{O}(\epsilon)$ by Lemma~\ref{lemma:blockPresv}. Thus, the implication of \eqref{eq:rightImply} is shown.

We present three lemmas that characterize the propagation of small perturbations in matrices through matrix multiplication, Kronecker products, and inverse operations. From these lemmas, \eqref{eq:left} and \eqref{eq:right} readily follow.

\begin{lemma} \label{lemma:product_bound_delta}
Let $\widehat A = A + \Delta A$ and $\widehat B = B + \Delta B$, where $\|\Delta A\| = \mathcal{O}(\epsilon)$ and $\|\Delta B\| = \mathcal{O}(\epsilon)$. For an arbitrarily small $\epsilon$,  \(\|AB - \widehat A \widehat B\| = \mathcal{O}(\epsilon).\)
\end{lemma}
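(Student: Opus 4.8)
The plan is to prove $\|AB-\widehat A\widehat B\|=\mathcal{O}(\epsilon)$ by the standard perturbation-of-a-product expansion. Writing $\widehat A\widehat B=(A+\Delta A)(B+\Delta B)=AB+A\,\Delta B+\Delta A\,B+\Delta A\,\Delta B$, I would subtract $AB$ to get
\begin{equation*}
\widehat A\widehat B-AB=A\,\Delta B+\Delta A\,B+\Delta A\,\Delta B,
\end{equation*}
and then apply the triangle inequality together with submultiplicativity of $\|\cdot\|$ to obtain
\begin{equation*}
\|AB-\widehat A\widehat B\|\le \|A\|\,\|\Delta B\|+\|\Delta A\|\,\|B\|+\|\Delta A\|\,\|\Delta B\|.
\end{equation*}
Here $\|A\|$ and $\|B\|$ are fixed finite constants (they do not depend on $\epsilon$), while $\|\Delta A\|=\mathcal{O}(\epsilon)$ and $\|\Delta B\|=\mathcal{O}(\epsilon)$ by hypothesis.

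Next I would bound each of the three terms separately. The first two terms are each a fixed constant times $\mathcal{O}(\epsilon)$, hence $\mathcal{O}(\epsilon)$. For the cross term $\|\Delta A\|\,\|\Delta B\|$, I would note that it is $\mathcal{O}(\epsilon^2)$, and since the statement concerns an arbitrarily small $\epsilon$ (so $\epsilon^2\le\epsilon$ for $\epsilon\le 1$), this term is also $\mathcal{O}(\epsilon)$. Summing the three $\mathcal{O}(\epsilon)$ bounds gives $\|AB-\widehat A\widehat B\|=\mathcal{O}(\epsilon)$, which is exactly the claim.

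One small subtlety worth being explicit about is the direction of the $\mathcal{O}(\cdot)$ convention used in this paper: the excerpt defines $f(\epsilon)=\mathcal{O}(g(\epsilon))$ via $|f(\epsilon)|\le c|g(\epsilon)|$ for all $\epsilon\ge\epsilon_0$, i.e.\ a large-$\epsilon$ convention, yet the lemma speaks of ``arbitrarily small $\epsilon$.'' In the intended reading the perturbations shrink as more data is collected, so the relevant regime is $\epsilon\to 0$ and the cross term should be absorbed using $\epsilon^2\le\epsilon$ for small $\epsilon$; I would simply state the bound for $\epsilon\le 1$ (or $\epsilon\le\epsilon_0$ for a suitable threshold) so that the quadratic term is dominated by the linear one. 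This reconciliation of the $\mathcal{O}$-convention with the ``small $\epsilon$'' phrasing is the only genuine point requiring care; the algebra itself is entirely routine. The main (and only mild) obstacle is thus bookkeeping the constants and confirming that the $\epsilon^2$ term is harmless, which the $\epsilon\le 1$ restriction handles immediately.
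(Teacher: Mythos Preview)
Your proof is correct and follows essentially the same approach as the paper: expand $\widehat A\widehat B=(A+\Delta A)(B+\Delta B)$, subtract $AB$, and apply the triangle inequality together with submultiplicativity to bound each of the three resulting terms. Your additional remarks about the $\mathcal{O}(\epsilon^2)$ cross term and the $\mathcal{O}$-convention are more explicit than the paper's presentation but do not change the argument.
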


\begin{proof} Substitute $\widehat A = A + \Delta A$ and $\widehat B = B + \Delta B$
\begin{align*}
AB - \widehat A \widehat B
&= AB - (A + \Delta A)(B + \Delta B) \\
&= -A\Delta B - \Delta A B - \Delta A \Delta B.
\end{align*}

Taking norms and by submultiplicativity and the triangle inequality, we have
\[
\|AB - \widehat A \widehat B\|
\le \|A\|\,\|\Delta B\|
+ \|\Delta A\|\,\|B\|
+ \|\Delta A\|\,\|\Delta B\|
= \mathcal{O}(\epsilon). \qedhere
\]
\end{proof}

\begin{lemma} \label{lemma:kronecker_bound_delta}
Let $\widehat A = A + \Delta A$ and $\widehat B = B + \Delta B$, where $\|\Delta A\| = \mathcal{O}(\epsilon)$ and $\|\Delta B\| = \mathcal{O}(\epsilon)$. For an arbitrarily small $\epsilon$,  \(\|A \otimes B - \widehat A \otimes \widehat B\| = \mathcal{O}(\epsilon).\)
\end{lemma}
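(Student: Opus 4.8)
The plan is to mirror the proof of Lemma~\ref{lemma:product_bound_delta}, substituting the expansion of the Kronecker product and then invoking bilinearity together with the submultiplicativity of the norm with respect to $\otimes$. Concretely, I would first write
\[
A \otimes B - \widehat A \otimes \widehat B
= A \otimes B - (A + \Delta A) \otimes (B + \Delta B).
\]
Expanding the right-hand side using the bilinearity of the Kronecker product gives
\[
A \otimes B - \widehat A \otimes \widehat B
= -\,A \otimes \Delta B \;-\; \Delta A \otimes B \;-\; \Delta A \otimes \Delta B.
\]

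Next I would take norms and apply the triangle inequality, followed by the multiplicative property $\|X \otimes Y\| = \|X\|\,\|Y\|$ (valid for the $\ell_2$-induced and Frobenius norms, which are the relevant ones here; for a general submultiplicative norm one has $\|X \otimes Y\| \le c\,\|X\|\,\|Y\|$ for a fixed constant $c$, which suffices for an $\mathcal{O}(\epsilon)$ bound). This yields
\[
\|A \otimes B - \widehat A \otimes \widehat B\|
\le \|A\|\,\|\Delta B\|
+ \|\Delta A\|\,\|B\|
+ \|\Delta A\|\,\|\Delta B\|.
\]
Since $\|A\|$ and $\|B\|$ are fixed constants and $\|\Delta A\| = \mathcal{O}(\epsilon)$, $\|\Delta B\| = \mathcal{O}(\epsilon)$, each of the three terms is $\mathcal{O}(\epsilon)$ (the last being $\mathcal{O}(\epsilon^2)$, hence also $\mathcal{O}(\epsilon)$ for small $\epsilon$), so the sum is $\mathcal{O}(\epsilon)$, completing the proof.

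The only genuine subtlety — and the step I would flag as the main obstacle — is the norm identity/inequality for the Kronecker product: unlike ordinary matrix multiplication, $\|X \otimes Y\| \le \|X\|\,\|Y\|$ is not automatic from submultiplicativity alone and must be justified for whichever norm class is in play. The cleanest resolution is to restrict attention to the spectral and Frobenius norms, for which $\|X \otimes Y\| = \|X\|\,\|Y\|$ holds exactly; since the ambient analysis in Appendix~\ref{appendix:exProof} only needs an $\mathcal{O}(\epsilon)$ conclusion, this restriction is harmless. Everything else is the routine expand-and-bound argument already used for the product case.
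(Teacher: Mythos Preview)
Your proposal is correct and matches the paper's proof essentially line for line: the paper performs the same bilinear expansion of $(A+\Delta A)\otimes(B+\Delta B)$, applies the triangle inequality together with the Kronecker norm property $\|X\otimes Y\|\le\|X\|\,\|Y\|$, and concludes $\mathcal{O}(\epsilon)$. Your additional remark about when the Kronecker norm inequality holds is more careful than what the paper states, but otherwise there is no meaningful difference.
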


\begin{proof}
Substitute $\widehat A = A + \Delta A$ and $\widehat B = B + \Delta B$:
\begin{align*}
A \otimes B - \widehat A \otimes \widehat B
=& A \otimes B - (A + \Delta A)\otimes(B + \Delta B) \\
=& -A \otimes \Delta B - \Delta A \otimes B - \Delta A \otimes \Delta B.
\end{align*}

Using the Kronecker product norm property $\|X \otimes Y\| \le \|X\|\|Y\|$ and applying the submultiplicativity and the triangle inequality, we get
\begin{align*}
\|A \otimes B - \widehat A \otimes \widehat B\|
&\le \|A\|\,\|\Delta B\|
+ \|\Delta A\|\,\|B\|
+ \|\Delta A\|\,\|\Delta B\| \\
&= \mathcal{O}(\epsilon).\qedhere
\end{align*}
\end{proof}

\begin{lemma}\label{lemma:inverse_bound}
Let $\widehat F_t = F_t + \Delta F_t$ with $\|\Delta F_t\| = \mathcal{O}(\epsilon)$. For an arbitrarily small $\epsilon$, \(\|\widehat F_t^{-1} - F_t^{-1}\| = \mathcal{O}(\epsilon).\)
\end{lemma}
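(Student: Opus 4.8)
The plan is to bound the perturbation of the matrix inverse using the standard Neumann-series (or resolvent) argument. First I would write $\widehat F_t^{-1} - F_t^{-1} = \widehat F_t^{-1}(F_t - \widehat F_t)F_t^{-1} = -\widehat F_t^{-1}\,\Delta F_t\,F_t^{-1}$, which is the exact identity obtained from $\widehat F_t^{-1}(F_t - \widehat F_t)F_t^{-1}$; this reduces the problem to controlling $\|\widehat F_t^{-1}\|$, since $\|F_t^{-1}\|$ is a fixed finite constant under Assumption~\ref{assump:CLinverse}.

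Next I would control $\|\widehat F_t^{-1}\|$. Write $\widehat F_t = F_t(I + F_t^{-1}\Delta F_t)$, so that $\|F_t^{-1}\Delta F_t\| \le \|F_t^{-1}\|\,\|\Delta F_t\| = \mathcal{O}(\epsilon)$. For $\epsilon$ small enough this quantity is below $1/2$, so the Neumann series for $(I + F_t^{-1}\Delta F_t)^{-1}$ converges and $\|(I + F_t^{-1}\Delta F_t)^{-1}\| \le 1/(1 - \|F_t^{-1}\Delta F_t\|) \le 2$. Hence $\widehat F_t$ is invertible and $\|\widehat F_t^{-1}\| \le \|(I + F_t^{-1}\Delta F_t)^{-1}\|\,\|F_t^{-1}\| \le 2\|F_t^{-1}\|$, a bound uniform in $\epsilon$.

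Combining the two steps, $\|\widehat F_t^{-1} - F_t^{-1}\| \le \|\widehat F_t^{-1}\|\,\|\Delta F_t\|\,\|F_t^{-1}\| \le 2\|F_t^{-1}\|^2\,\|\Delta F_t\| = \mathcal{O}(\epsilon)$, which is the claim. I would note that the ``arbitrarily small $\epsilon$'' hypothesis is exactly what is needed to ensure $\widehat F_t$ stays invertible; since $\epsilon \to 0$ in the relevant regime this is harmless.

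The only subtlety — and the main thing worth stating carefully rather than a genuine obstacle — is the invertibility of $\widehat F_t$ for small perturbations, which is precisely what the smallness of $\epsilon$ buys via the Neumann series. Everything else is submultiplicativity and the triangle inequality, exactly in the style of Lemmas~\ref{lemma:product_bound_delta} and~\ref{lemma:kronecker_bound_delta}. One could alternatively invoke the standard perturbation bound $\|\widehat F_t^{-1} - F_t^{-1}\| \le \frac{\|F_t^{-1}\|^2\|\Delta F_t\|}{1 - \|F_t^{-1}\|\|\Delta F_t\|}$ directly, but deriving it via the resolvent identity as above keeps the proof self-contained.
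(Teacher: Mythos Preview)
Your proof is correct and follows essentially the same approach as the paper: both use the resolvent identity (the paper writes it as $-F_t^{-1}\Delta F_t\,\widehat F_t^{-1}$, you as $-\widehat F_t^{-1}\Delta F_t\,F_t^{-1}$, which are equally valid), then bound $\|\widehat F_t^{-1}\|$ via the Neumann series for $(I+F_t^{-1}\Delta F_t)^{-1}$ and conclude by submultiplicativity. The only cosmetic difference is that you make the constant explicit ($2\|F_t^{-1}\|$) whereas the paper leaves it as $\|F_t^{-1}\|/(1-\|F_t^{-1}\|\|\Delta F_t\|)=\mathcal{O}(1)$.
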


\begin{proof}
One can verify that, 
\begin{equation}\label{eq:exact-id}
    \widehat F_t^{-1} - F_t^{-1} = -\,F_t^{-1}\,\Delta F_t\,\widehat F_t^{-1}.
\end{equation}

By left-multiplying both sides of \eqref{eq:exact-id} by $F_t$, the equality trivially holds. Given $\|\Delta F_t\|$ is arbitrarily small, we have $\|F_t^{-1}\Delta F_t\|<1$. Thus, the series $(I + F_t^{-1}\Delta F_t)^{-1} = \sum_{k=0}^{\infty}(-F_t^{-1}\Delta F_t)^k$ is convergent and  
\begin{align*}
    \|\widehat F_t^{-1}\| &= \|(F_t +\Delta F_t)^{-1}\| = \|[F_t(I + F_t^{-1}\Delta F_t)]^{-1}\| \\
    & = \| (I + F_t^{-1}\Delta F_t)^{-1} F_t^{-1} \|\\
    & \textstyle\leq  \left( \sum_{k=0}^{\infty}\|-F_t^{-1}\Delta F_t\|^k\right)\|F_t^{-1} \|  \\
    &\le \frac{\|F_t^{-1}\|}{1-\|F_t^{-1}\|\,\|\Delta F_t\|}=\mathcal{O}(1).
\end{align*}

Taking norms of \eqref{eq:exact-id}, we have
\begin{align*}
\|\widehat F_t^{-1} - F_t^{-1}\|
&= \|F_t^{-1}\,\Delta F_t\,\widehat F_t^{-1}\| \\
&\le \|F_t^{-1}\|\, \|\widehat F_t^{-1}\|\, \|\Delta F_t\|  = \cO(\epsilon). \qedhere
\end{align*}
\end{proof}

\section{Sample complexity definition} \label{appendix:nDef}
The sample complexity required for providing the probabilistic error bound \eqref{eq:paramConcentration} for linear regression is presented in \cite[Theorem 3.1]{krikheli2018finitesampleperformancelinear}, we provide it here for completeness.
\begin{align*}
&n(\epsilon, \delta) = \max \left\{ N_1(\epsilon, \delta), \, N_{\text{rand}}(\delta) \right\}, \\
&N_1(\epsilon, \delta) = \frac{8 \alpha^2 \delta^2}{\epsilon^2 \sigma_{\min}^2} \log \left( \frac{2p}{\delta} \right), \\
&N_{\text{rand}}(\delta) = \frac{4}{3} \cdot \frac{(6\sigma_{\max} + \sigma_{\min}) \left( p \alpha^2 + \sigma_{\max} \right)}{\sigma_{\min}^2} \log \left( \frac{2p}{\delta} \right).
\end{align*}
where $\delta$ is the sub-Gaussian parameter of the input observation noise $\nu_t^{(s)}$ and other parameters are defined as follows.
\begin{align*}
&\text{\textPsi} = \frac{1}{n}(\begin{bmatrix}
    X_t \\ \mathbf{1}_{1\times n} 
\end{bmatrix}^\top \begin{bmatrix}
    X_t \\ \mathbf{1}_{1\times n} 
\end{bmatrix}), \\
&\sigma_{\max} := \lambda_{\max}(\text{\textPsi}), \quad \sigma_{\min} := \lambda_{\min}(\text{\textPsi}), \\
&p = \text{rank}(\begin{bmatrix}
    X_t \\ \mathbf{1}_{1\times n} 
\end{bmatrix}^\top \begin{bmatrix}
    X_t \\ \mathbf{1}_{1\times n} 
\end{bmatrix}).
\end{align*}

\section{Sufficient condition for the existence and uniqueness of Nash equilibrium in LQG game} \label{appendix:phi}

A sufficient condition for the existence and uniqueness of the Nash equilibrium of the LQG game \eqref{problem:unconstrainedLQ} is the invertibility of the following matrix for all $t \in [T]^-$ \cite[Remark 6.5]{Basar1998}.

{\small
\begin{align} \label{eq:phi}
\Phi_t =& \begin{bmatrix} 
R_{t}^{1} + (B_{t}^{1})^\top P^{1*}_{t+1} B_{t}^{1}& \dots & (B_{t}^{1})^\top P^{1*}_{t+1} B^N_t\\
(B_{t}^{2})^\top P^{2*}_{t+1} B^1_t & \dots &  (B_{t}^{2})^\top P^{2*}_{t+1} B^N_t\\
\vdots & \ddots & \vdots \\
(B_{t}^{N})^\top P^{N*}_{t+1} B^1_t & \dots &  R_{t}^{N} + (B_{t}^{N})^\top P^{N*}_{t+1} B_{t}^{N}
\end{bmatrix}.
\end{align}}

\end{document}